\pdfoutput=1
\documentclass[10pt,twocolumn,letterpaper]{article}

\usepackage[letterpaper,textwidth=516pt,columnsep=12pt,top=0.75in,bottom=0.9in]{geometry}

\newif\ifsubmission
\submissionfalse

\usepackage{amsmath,amssymb,amsfonts}
\usepackage{amsthm}
\newtheorem{proposition}{Proposition}
\newtheorem{theorem}{Theorem}
\theoremstyle{definition}
\newtheorem{definition}{Definition}
\theoremstyle{plain}
\usepackage{graphicx}
\usepackage{booktabs}
\usepackage{algorithm}
\usepackage{algpseudocode}
\usepackage{xcolor}
\usepackage[hidelinks]{hyperref}
\usepackage[capitalise]{cleveref}
\usepackage{orcidlink}
\usepackage{tikz}
\usetikzlibrary{arrows.meta,positioning,fit,backgrounds,calc,shapes.geometric}
\usepackage{float}

\floatstyle{ruled}
\newfloat{functionality}{tbp}{lof}
\floatname{functionality}{Functionality}
\crefname{functionality}{Functionality}{Functionalities}
\Crefname{functionality}{Functionality}{Functionalities}
\crefname{section}{Section}{Sections}
\Crefname{section}{Section}{Sections}
\crefname{subsection}{Section}{Sections}
\Crefname{subsection}{Section}{Sections}
\crefname{subsubsection}{Section}{Sections}
\Crefname{subsubsection}{Section}{Sections}
\crefname{appendix}{Appendix}{Appendices}
\Crefname{appendix}{Appendix}{Appendices}

\definecolor{paperblue}{HTML}{5B7FA6}     %
\definecolor{paperbluelt}{HTML}{C9D6E3}   %
\definecolor{paperneutral}{HTML}{5B7B6E}  %
\definecolor{paperneutrallt}{HTML}{C7D0D6}%
\definecolor{paperamber}{HTML}{D4826A}    %
\definecolor{blueA}{HTML}{D3DEEA}
\definecolor{blueB}{HTML}{A9BDD3}
\definecolor{blueC}{HTML}{7B9BBF}
\definecolor{blueD}{HTML}{5B7FA6}
\definecolor{sanzotan}{HTML}{C6A87D}
\definecolor{sanzotanlt}{HTML}{D4C5A9}
\definecolor{sanzotangrey}{HTML}{B5A88B}
\definecolor{sanzobluegrey}{HTML}{8B9EA8}
\definecolor{sanzosage}{HTML}{7B9E87}

\newcommand{\Nc}{N}                                  %
\newcommand{\Dj}{\mathcal{D}_j}                      %
\newcommand{\nj}{n_j}                                %
\newcommand{\Cc}{C}                                  %
\newcommand{\featx}{\varphi}                         %
\newcommand{\thz}{\theta_0}                          %
\newcommand{\disp}{\Delta_j}                         %
\newcommand{\wj}{w_j}                                %
\newcommand{\thstar}{\theta^{\star}}                 %
\newcommand{\tc}{t}                                  %
\newcommand{\Fhe}{\mathcal{F}^{\mathsf{HE}\text{-}\mathsf{OFT}}}                       %
\newcommand{\Leak}{\mathcal{L}}                      %
\newcommand{\Dset}{\{\Dj\}_{j=1}^{\Nc}}              %
\newcommand{\negl}{\mathrm{negl}}                    %

\newcommand{\Asel}{A_{\mathrm{sel}}}                 %
\newcommand{\Aloc}{A_{\mathrm{loc}}}                 %
\newcommand{\Adis}{A_{\mathrm{dis}}}                 %
\newcommand{\Apool}{A_{\mathrm{pool}}}               %

\newif\iffromscratch\fromscratchtrue

\newcommand{\runin}[1]{\textbf{#1}\enspace}

  \newcommand{\paperonly}[1]{}

\makeatletter
\renewcommand\subsection{\@startsection{subsection}{2}{\z@}%
  {-3.25ex\@plus -1ex \@minus -.2ex}{1.5ex \@plus .2ex}%
  {\normalfont\large\bfseries\raggedright}}
\makeatother

\newenvironment{IEEEkeywords}{\par\smallskip\noindent\textbf{Keywords:}\ }{\par}

\hypersetup{pdftitle={HE-OFT: Privacy-Preserving One-Shot Federated Fine-Tuning
  under Homomorphic Encryption (Technical Report)},
  pdfauthor={Halil İbrahim Kanpak, Sinem Sav, Alptekin Küpçü}}
\begin{document}

\title{HE-OFT: Privacy-Preserving One-Shot Federated Fine-Tuning under Homomorphic Encryption\\[2pt]\large Technical Report}

\author{
Halil \.{I}brahim Kanpak\,\orcidlink{0009-0005-0284-5941},
Sinem Sav\,\orcidlink{0000-0001-9096-8768},
Alptekin K\"up\c{c}\"u\,\orcidlink{0000-0003-2099-2206}%
\thanks{Corresponding Author: Sinem Sav (sinem.sav@cs.bilkent.edu.tr)}%
\thanks{Halil \.{I}brahim Kanpak is with the Department of Computer Engineering, Ko\c{c} University, İstanbul, Türkiye (e-mail: hkanpak21@ku.edu.tr).}%
\thanks{Sinem Sav is with the Department of Computer Engineering, Bilkent University, Ankara, Türkiye (e-mail: sinem.sav@cs.bilkent.edu.tr).}%
\thanks{Alptekin K\"up\c{c}\"u is with the Department of Computer Engineering, Ko\c{c} University, İstanbul, Türkiye (e-mail: akupcu@ku.edu.tr).}%
}

\maketitle

\begin{abstract}
Many organizations adapt large pretrained models to their own tasks by fine-tuning on private data. Several of these parties often hold data for the same task and wish to fine-tune a model together without pooling that data.
Federated learning (FL) enables joint fine-tuning, but reconstruction attacks on shared intermediate values (the model or its gradients) remain a privacy risk. 
A one-shot protocol that exchanges one encrypted contribution exposes no intermediate value.
Such a protocol still gives the trained model to every participant, which is not permitted
where the model is a regulated or proprietary asset. We present HE-OFT\footnote{The implementation of HE-OFT, covering local
training, aggregation, selection and encrypted serving, is available at
\url{https://github.com/CRYPTO-KU/HE-OFT}.}, the first cryptographically secure one-shot federated
fine-tuning protocol in which no party receives the trained model. Each client
fine-tunes a low-rank adapter and a classifier head on a frozen public backbone
and keeps the adapter. The client uploads one encrypted head displacement, which the
server combines under multiparty CKKS and never decrypts. A quorum of clients
returns only the predicted label to the querier. On four text
classification tasks and one vision task, HE-OFT reaches $61$ to $79$ per cent accuracy, against $20$ to $48$ per cent for a client training alone. HE-OFT keeps $85$ to $96$ per cent of the accuracy of a disclosed model. A test-time query takes $443.1$ to $1713.1$\,s on one core, or $56.1$ to $255.1$\,s with level restoration on a GPU. Restoring levels at the server cuts the traffic per query from up to $1.6$\,GiB to $13.5$\,MiB.
\end{abstract}

\begin{IEEEkeywords}
Federated learning, one-shot federated learning, federated fine-tuning, homomorphic encryption,
multiparty computation, privacy-preserving machine learning.
\end{IEEEkeywords}

\section{Introduction}
\label{sec:intro}

Organizations fine-tune a large public pretrained model, known as the
\emph{backbone}, on their own data~\cite{bommasani2021opportunities,hu2022lora}. Several
organizations often hold complementary data for the same task, such as different
patient populations, fraud patterns, or customer bases. Each would obtain a
better model from the combined data than from its own~\cite{sheller2020federated}.
Pooling the data is difficult, because data protection law restricts the sharing of
confidential records across institutions~\cite{gdpr2016,hipaa,kvkk2016,pipl2021,dpdp2023}. Competing organizations
also resist pooling, and handing the data to the operator of the fine-tuning
infrastructure meets the same objection~\cite{xiao2023offsite}.

\begin{figure*}[t]
\centering
\includegraphics{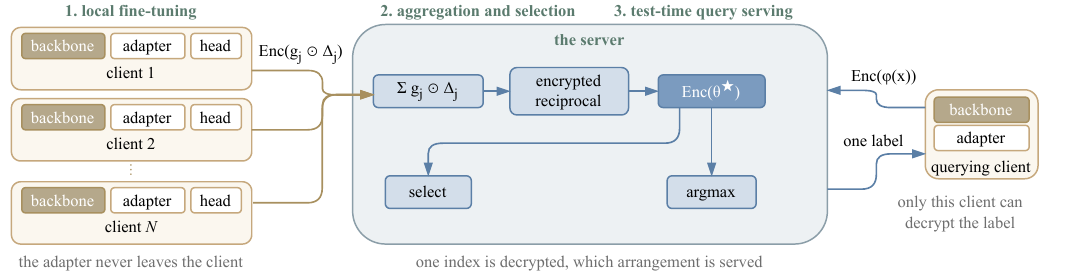}
\caption{HE-OFT in three stages: local fine-tuning, aggregation and selection, and test-time query serving. Each client fine-tunes an adapter and a
classifier head on the same frozen public backbone. The client keeps the adapter and
uploads one encrypted head displacement weighted by its per-class counts. The
server adds the ciphertexts and divides by the per-class totals under
encryption. The shared head $\thstar$ exists only as a ciphertext.}
\label{fig:training}
\end{figure*}

\runin{Privacy risks in federated learning.}Federated learning (FL) enables collaboration~\cite{mcmahan2017communication},
and its privacy risk is concentrated at training time~\cite{nasr2019comprehensive,melis2019exploiting}. FL exchanges
model updates in place of data, and the updates are themselves open to privacy
attacks. An adversary who aggregates a client's per-round updates performs
membership inference at a $25.3\%$ true-positive rate (TPR) at a $0.1\%$
false-positive rate (FPR), against $0.18\%$ from a single snapshot~\cite{zhu2025fedmia}. The attack surface thus depends on what a protocol reveals in training and how often.

\runin{Requirements for privacy-preserving federated fine-tuning.}A federated fine-tuning protocol that resists these attacks needs three properties. The first is a \emph{single exchange}. Each further round gives an adversary another observation of the same
client, and repetition alone raises the true-positive rate of membership inference by two orders of magnitude. Such a protocol is known as \emph{one-shot}~\cite{guha2019oneshot}. Each client
contributes once, and no intermediate training artifact is exposed. The parties communicate afterwards to answer queries. The second is \emph{encrypted
contributions}, since even a single update can expose its training
data~\cite{zhu2019deep,geiping2020inverting}. The third is a \emph{confidential
model}, which no party holds in plaintext. A model fine-tuned on confidential
records can be a regulated artifact in domains such as health, biometrics, and
credit~\cite{euaiact2024}. If the released model averages the
contributions with public weights, all clients but one can also subtract their
own and recover the last.

\runin{Threat scenario.}The parties are the clients and one server that combines
their contributions and answers queries. The adversary corrupts the server
together with any coalition of fewer clients than the threshold $\tc$ required
for decrypting a ciphertext. The corrupted parties
follow the protocol, pool what they observe, and try to learn an honest client's
data, its class histogram, or the trained model. We also let the corrupted clients
deviate from the protocol while the server stays honest. A deviating server must
be excluded, because it could present any ciphertext for decryption, such as a
row of the shared head, and the clients cannot tell it from a legitimate request
(\cref{prop:nogate}).

\runin{Existing approaches.}Each of the closest systems meets some of these properties
(\cref{tab:related}). One-shot FL~\cite{guha2019oneshot} methods such as
DENSE~\cite{zhang2022dense}, FuseFL~\cite{tang2024fusefl} and
GH-OFL~\cite{turazza2026ghofl} exchange once, send the contribution in plaintext
and give every participant the model. FedAUXfdp adds differential privacy (DP) to
the one-shot exchange and trades accuracy for a tighter privacy
budget~\cite{hoech2022fedauxfdp}. POSEIDON keeps the contributions and the model
under homomorphic encryption (HE) over many rounds, with every activation replaced
by a polynomial in a deep, repeatedly bootstrapped circuit~\cite{sav2021poseidon}.
SHE-LoRA encrypts a selected part of each update over many rounds and shares each
round's aggregate with the clients~\cite{li2025shelora}. slytHErin and CryptPEFT
evaluate a model trained elsewhere, under HE or two-party computation (2PC), and
return scores to the client that sends the
query~\cite{intoci2023slytherin,xia2026cryptpeft}. HE-OFT combines a single
encrypted exchange with a model that no party holds in plaintext, and it returns
only the label.

\runin{Design principles.}The three properties together determine each stage of the design. Training under
encryption is impractical at this scale, and each client therefore trains alone,
in plaintext, and contributes once. One round suffices because every client
starts from the same pretrained backbone. Pretraining places the clients'
fine-tuned solutions in a close region of the loss landscape, and the changes
they make thus combine
linearly~\cite{ilharco2023editing,malinovsky2024oneround}. Heterogeneous data
still costs a single round more accuracy than many rounds, which correct the
merge step by step~\cite{tao2024taskarith}. The client that sends a query, called the \emph{querier},
computes the features of that query itself. Hence we place the shared trained map after the last nonlinearity. In a transformer, that map is the final linear map.
An earlier position would force the querier to evaluate nonlinearities under
encryption or to decrypt an intermediate value. The merge divides each class
row by the federation's total count for that class. The server keeps these totals
encrypted, because all clients but one could subtract their own counts from them
and read the last client's class histogram. A querier that receives scores for
inputs of its choice can recover a linear head from $d+1$ answers, where $d$ is
the feature dimension.
The federation therefore returns only the label, and only to the querier. 

\runin{Our approach.}We present HE-OFT, the first one-shot federated fine-tuning protocol in which no party
ever holds a contribution, a training artifact, or the trained model in
plaintext. Each client fine-tunes a small adapter and a classifier head on the
frozen public backbone. It encrypts its contribution
under multiparty CKKS~\cite{cheon2017ckks,mouchet2021multiparty}, a homomorphic
encryption scheme in which the secret key is split across the parties. Because
the backbone stays public and in plaintext, the cryptographic cost depends on the
feature dimension and the number of classes, and not on the depth or size of the
backbone. HE-OFT can thus fine-tune pretrained transformers.

\runin{Overview.}HE-OFT has three stages (\cref{fig:training}). In local fine-tuning, every client fine-tunes an adapter and a classifier head on the same frozen public backbone and
keeps the adapter. We call the difference between a client's trained head and a public
initializer its \emph{head displacement}. The client uploads one encrypted head
displacement, weighted by its per-class counts. In aggregation and selection, the server adds the ciphertexts and multiplies the
sum by the encrypted reciprocal of the per-class totals. The result is the shared
head, in which each class row comes from the clients that hold that class. The
shared head exists only as a ciphertext, and the server holds no key to decrypt
it. The federation then chooses between the shared head over the bare backbone
and the shared head over each client's own adapter, called the two
\emph{arrangements}. The federation decrypts only the
index of the arrangement it serves. Both arrangements cost the same per query,
because the querier runs the backbone and its own adapter in plaintext. In test-time query serving, the querier encrypts its features. The server applies the shared head and takes the argmax under encryption. A quorum of clients re-encrypts the label under the querier's key, and only the querier can decrypt it.

\runin{Contributions.}We make the following contributions.
\begin{itemize}
\item HE-OFT, a one-shot federated fine-tuning protocol under multiparty CKKS in
  which the trained model is never decrypted (\cref{sec:method}).
\item A merge that weights each class row of the encrypted head displacements
  by the clients' counts for that class. The server inverts the class totals
  under encryption, and every adapter stays with its owner (\cref{sec:split}).
\item An encrypted serving path for fine-tuned transformers under multiparty HE. The
  server computes the argmax without decrypting, and a quorum of clients returns
  only the label, and only to the querier (\cref{sec:serving}).
\item An estimator that selects between the two arrangements under encryption and
  decrypts one index (\cref{sec:selection}).
\item A simulation proof against semi-honest parties, an input-privacy bound
  against malicious clients, and a proof that an honest server is necessary
  whenever the clients decrypt ciphertexts that the server presents
  (\cref{sec:security}).
\item An implementation in multiparty CKKS, evaluated with pretrained
  RoBERTa-base~\cite{liu2019roberta} and vision transformer
  ViT-B/16~\cite{dosovitskiy2021vit} backbones on five tasks (\cref{sec:experiments}). HE-OFT reaches $61$ to $79$ per cent accuracy where a client alone
  reaches $20$ to $48$ per cent. It keeps $85$ to $96$ per cent of the accuracy of
  a model aggregated and released in plaintext. A query costs $13.5$\,MiB of
  traffic.
\end{itemize}

\section{Preliminaries}
\label{sec:prelim}

\subsection{Collaborative Fine-Tuning}
\label{sec:prelim-ft}

Fine-tuning adapts a pretrained model to a new task with a modest amount of
labeled data~\cite{bommasani2021opportunities,hu2022lora}.
The frozen pretrained part is known as the \emph{backbone}, and we write $\featx$ for
the map from an input to its $d$-dimensional feature vector. Parameter-efficient
fine-tuning trains a small number of added parameters and leaves the backbone
unchanged. LoRA adds a low-rank update to a backbone weight matrix. The update is the
product of a down-projection to rank $r$ and an up-projection back to the full
width~\cite{hu2022lora}. We freeze the down-projection at a public initialization
and train the up-projection alone, as FFA-LoRA does~\cite{sun2024ffalora}. We
call this trained unit client $j$'s \emph{adapter} $A_j$, and $\varphi_j$ denotes
the backbone composed with it.

A classifier head is a linear map from the $d$ features to $\Cc$ class logits.
Every client starts its head from one public initializer $\thz$ and trains it
with its adapter. The \emph{head displacement} of client $j$ is
$\disp=h_j-\thz$, where $h_j$ is its trained head.

In federated learning (FL), $\Nc$ clients train a model on their joint data
without moving it~\cite{mcmahan2017communication}. Client $j\in\{1,\dots,\Nc\}$
holds a dataset $\Dj$ of size $\nj$, with $n_{j,c}$ examples of class
$c\in\{1,\dots,\Cc\}$. FL usually runs many rounds, and in each round the clients
send updates that the server averages. One-shot FL replaces the rounds with a
single exchange~\cite{guha2019oneshot}. In HE-OFT that exchange carries each
client's encrypted head displacement, from which the server forms the shared
head $\thstar$. We write $\mathsf{Enc}(\cdot)$ for encryption under the collective
public key of \cref{sec:mhe}. Notation appears in \cref{tab:notation}.

\subsection{Multiparty Homomorphic Encryption}
\label{sec:mhe}

Multiparty homomorphic encryption lets several parties compute on data that
stays encrypted under one key, which no party holds
alone~\cite{mouchet2021multiparty}. Each party keeps a share of the secret key,
and decryption needs the cooperation of enough parties. We build on the CKKS
instantiation~\cite{cheon2017ckks}, which encrypts vectors of real numbers and
computes on them approximately. CKKS is a leveled
scheme~\cite{lyubashevsky2010ideal}, and each multiplication consumes one
of a finite number of levels, which bootstrapping
restores~\cite{cheon2018bootstrapping}.

We use the $\tc$-out-of-$\Nc$ threshold access structure of Mouchet et
al.~\cite{mouchet2023threshold}, where $2\le\tc\le\Nc$. Any $\tc$ clients form a
\emph{quorum} that can decrypt, and no $\tc-1$ clients can. The corruption bound
throughout \cref{sec:security} is hence $\tc-1$.

HE-OFT uses four operations of the multiparty scheme. Key generation combines the
clients' key shares into one collective public key and the evaluation keys, and
no party ever forms the secret key itself. Any party can thus encrypt and compute
on ciphertexts, but decryption needs shares from a quorum, and fewer shares reveal
nothing about the plaintext. A key switch turns a ciphertext under the collective
key into one under a chosen party's public key, and only that party can then
decrypt it. A collective refresh restores the levels of a ciphertext, with every
client contributing a share.

Our implementation sets $\tc=\Nc$. We measure that setting in \cref{sec:exp-cost}. Every statement in \cref{sec:security} holds for any $\tc$.

\section{Method}
\label{sec:method}
\label{sec:setting}

The $\Nc$ clients of \cref{sec:prelim-ft} hold private labeled data over a
common label space of $\Cc$ classes. The class proportions differ across
clients, and a client may hold no example of a class. The clients seek one
classifier over the whole label space without pooling their data.

We consider $\Nc$ clients and one server. The clients are organizations that each
hold confidential labeled data, and the server is an untrusted service that
combines their contributions and answers test-time queries. A static, semi-honest
adversary corrupts the server together with up to $\tc-1$ clients. The corrupted
parties follow the protocol, pool their views, and try to learn an honest client's
data, its class histogram, or the shared head. We also consider up to $\tc-1$
clients that deviate arbitrarily while the server stays honest
(\cref{sec:malicious}). Every client may query the served head and learn the
label of each query, up to an allowance of $Q$ queries. In \cref{sec:security}, we
state what each party sees and prove the guarantees.

\begin{table}[t]
\centering
\caption{Notations.}
\label{tab:notation}
\footnotesize
\setlength{\tabcolsep}{4pt}
\begin{tabular}{@{}ll@{}}
\toprule
Symbol & Meaning \\
\midrule
$\Nc$, $\Cc$, $d$ & numbers of clients and classes, feature dimension \\
$\featx$, $\varphi_j$ & public backbone, and backbone with client $j$'s adapter $A_j$ \\
$\thz$, $h_j$, $\disp$ & public head initializer, client $j$'s head, $\disp=h_j-\thz$ \\
$g_j$ & client $j$'s per-class counts, $g_{j,c}=n_{j,c}$ \\
$\thstar$ & shared head, held only as a ciphertext \\
$\mathsf{Enc}$ & encryption under the collective public key \\
$\tc$ & decryption threshold \\
$Q$ & per-client query allowance \\
$\Dj,\ \nj$ & client $j$'s dataset and its size $\nj=|\Dj|$ \\
$n_{j,c}$ & examples of class $c$ held by client $j$ \\
$g_j\odot\disp$ & $\disp$ with each class row scaled by its count \\
$\wj$ & sample weight $\nj/\sum_i n_i$ \\
$\alpha$ & concentration of the Dirichlet label partition \\
$K$ & number of local steps \\
$\Aloc$ & accuracy of a client alone \\
$\Asel$ & accuracy of the selected arrangement \\
$\Adis$ & accuracy of the disclosed model \\
$\Apool$ & accuracy of pooled training \\
\bottomrule
\end{tabular}
\end{table}

Two limits of computation under encryption shape the protocol.
Gradient steps, softmax, and adaptive curvature need deep circuits that
bootstrap repeatedly, and all learning therefore happens in plaintext on the
clients. Multiplicative depth dominates the homomorphic cost, and the server
therefore combines the uploads by a linear map. A server that computes on
ciphertexts also cannot inspect, compare, or branch on a value. It cannot apply an
aggregation rule that depends on the data, and the clients thus choose between
the two arrangements under encryption.

\subsection{Encrypted One-Shot Training}
\label{sec:split}

\subsubsection{Local Fine-Tuning of the Adapter and Head}
Each client trains its adapter $A_j$ and its head $h_j$ on the frozen public
backbone $\featx$ (\cref{sec:prelim-ft}). Only the head is shared. Client $j$
computes its head displacement $\disp\in\mathbb{R}^{\Cc\times d}$, encrypts it,
and uploads it once. The server combines the encrypted
displacements into a single shared head and never decrypts the result. The
adapter stays on the client.

\subsubsection{Placement of the Shared Head}
The querier computes the features of its query itself. If the shared trained quantity were an adapter placed inside the
backbone, the features would depend on it, and computing them would require
evaluating the backbone with encrypted weights. The client would then have to
evaluate every nonlinearity of the network homomorphically. We therefore place the shared trained quantity after the last nonlinearity, as the final linear map of a transformer.

Any trained linear map after the last nonlinearity can take this place,
provided the querier can compute the features that enter it. A classifier head is one such map. The vocabulary projection of an autoregressive decoder is another. It is linear and follows the final layer normalization, and the client computes its input from a prefix it generated itself. We evaluate classification only. HE-OFT does not require the label space
to be a set of classes. \Cref{sec:scope} lists what a generation setting would
also need, including the greedy decoding that the argmax of \cref{sec:serving} imposes.

Each client keeps its own adapter, and the representation still adapts to local
data. The split between adapter and head follows from coverage. A client can
improve its own representation alone, but it cannot learn to recognize a class it
has never observed. Coverage is a property of the head, which the federation provides. \Cref{sec:exp-split} measures both parts. Keeping the
adapter local also matters cryptographically, because no aggregate of the
adapters is ever formed (\cref{sec:threat}).

\subsubsection{Merging Client Contributions by Class Coverage}
Let $g_{j}\in\mathbb{R}^{\Cc}$ collect client $j$'s per-class example counts,
$g_{j,c}=n_{j,c}$. The shared head is the coverage-weighted combination
\begin{equation}
\label{eq:aggregate}
\thstar \;=\; \thz \;+\; \frac{\sum_{j=1}^{\Nc} \mathrm{diag}(g_j)\,\disp}
                              {\sum_{j=1}^{\Nc} g_j},
\end{equation}
in which row $c$ of the shared head is the count-weighted average of the clients'
row-$c$ displacements. Clients holding a class decide its row, and rows that no
client covers remain at the public initializer. HE-OFT handles label skew through these weights. Row $c$ depends only on the clients with $n_{j,c}>0$, each in
proportion to $n_{j,c}$. Each client forms the product
$\mathrm{diag}(g_j)\,\disp$ locally before encryption, and the server therefore
only adds ciphertexts.

\subsubsection{Per-Class Averaging Under Encryption}
The denominator of \cref{eq:aggregate} is the vector of federation-wide per-class
totals, and it must stay encrypted. Every client knows its own counts. A
coalition of $\Nc-1$ clients that learned the totals could subtract its own counts
and read the remaining client's class histogram. The server therefore evaluates
the reciprocal under encryption, with an inversion circuit over the positive
domain whose refreshes are collective. The server inverts the totals once, at aggregation.

\begin{algorithm}[t]
\caption{Building the shared head. All clients hold the same frozen public
backbone $\featx$ and the same public head initializer $\thz$.}
\label{alg:train}
\small
\begin{algorithmic}[1]
\State clients run distributed key generation, obtaining a collective public key
\For{client $j=1,\dots,\Nc$ \textbf{in parallel}}
  \State train adapter $A_j$ and head $h_j$ on $\Dj$ \Comment{plaintext, local}
  \State \textbf{keep} $A_j$, which is never transmitted
  \State send $\mathsf{Enc}(\mathrm{diag}(g_j)\,\disp)$ and $\mathsf{Enc}(g_j)$,
         where $\disp=h_j-\thz$
\EndFor
\State \textbf{server} adds the ciphertexts to form the numerator and the
       denominator of \cref{eq:aggregate} \Comment{additions only}
\State \textbf{server} evaluates the reciprocal of the denominator under
       encryption and multiplies \Comment{once, not per query}
\State \Return $\mathsf{Enc}(\thstar)$, which no party decrypts
\end{algorithmic}
\end{algorithm}

The inversion
converges on a bounded positive interval. We scale the totals by the
federation's example count, which is public, and the argument thus lies in
$(2^{-6},1]$. A class whose share is below $2^{-6}$ falls outside that interval.
We measured a relative error of $1.9\times10^{-8}$ at ten parties, on denominators in $[0.02,0.99]$. This range does not cover the lowest shares.
Banking77 and CIFAR-100 have class shares near $0.013$ and $0.010$. A deployment
at those label-space sizes can scale the totals by the example count divided by $\Cc$, also a public parameter. A balanced class then maps to $1$ instead of $1/\Cc$. A deployment can instead widen the interval to $(2^{-k},1]$ with $k$ set from
$\Cc$. Either option adds depth at aggregation and costs nothing per query.

A class that no client holds would cause a division by zero.
Detecting a zero total under encryption needs a comparison, which HE-OFT does not include.
In \cref{sec:exp-split}, we count the holders of every class row and find none at zero on any task, seed or partition we measured. The case therefore does not arise in our configuration. A deployment that cannot rule it out needs either a public list of covered classes or a comparison in the aggregation circuit. The list discloses coverage, and the cost of \cref{sec:exp-cost} does not include the comparison.

\subsubsection{Cost of the Encrypted Aggregation}
The server performs one ciphertext addition per client per ciphertext, one
encrypted reciprocal, and one multiplication. The encrypted object is the head
alone, whose size is set by the number of classes and the feature dimension.
Communication is one upload per client, and no client downloads model
parameters. \Cref{alg:train} gives the full training procedure.

\subsection{Serving Arrangement Selection Under Encryption}
\label{sec:selection}

\begin{figure*}[t]
\centering
\includegraphics[width=\textwidth]{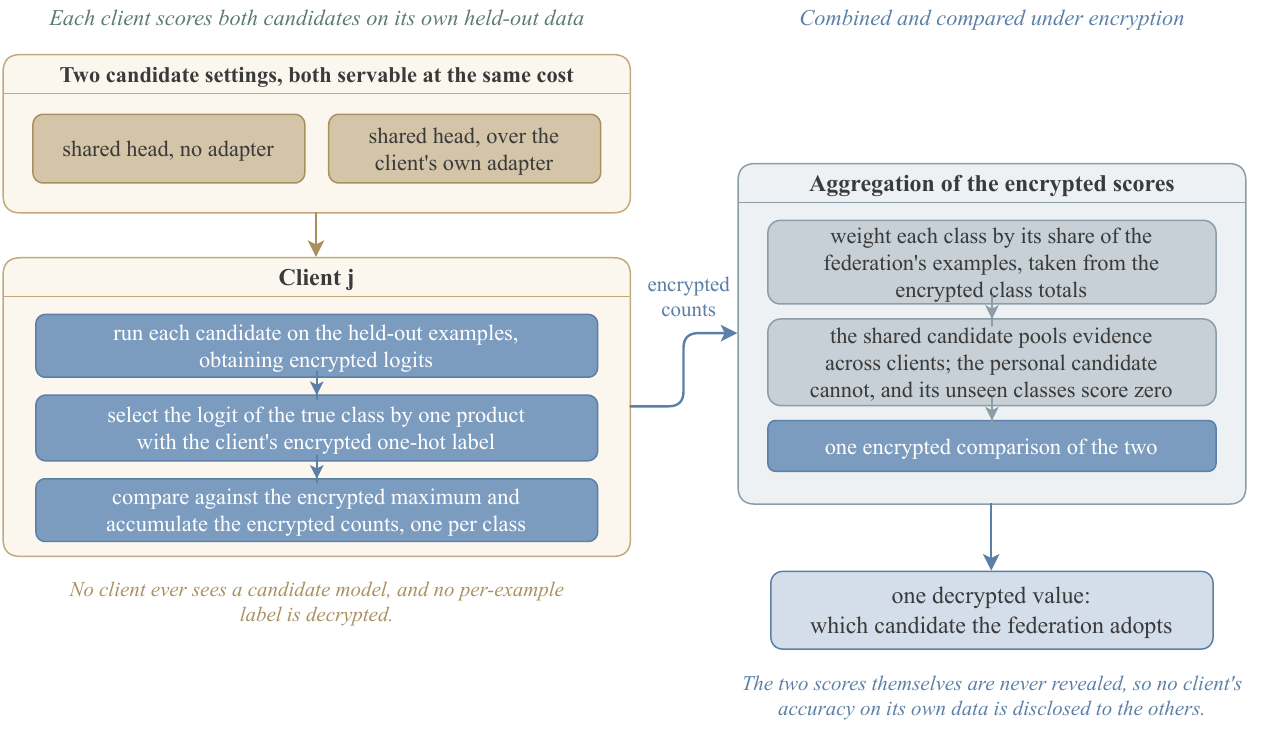}
\caption{Choosing between two arrangements without decrypting either. Each client scores both on data held back from its own training. One product with the client's encrypted one-hot label selects the true-class logit. The per-class counts of correct predictions are weighted by each class's share of the federation's examples, taken from the count matrix already used for the aggregation. The shared arrangement pools evidence across clients because it is one model, and the personal arrangement does not. Only the classes a client holds add to its score, which is thus a lower bound. A quorum decrypts one value only.}
\label{fig:selection}
\end{figure*}

\subsubsection{Encrypted Scoring of the Shared Head With and Without the Client Adapter}
Two arrangements are available at the same cost. We call the shared head over
each client's own adapter the personal arrangement, and the shared head over the
bare public backbone the shared arrangement. Which is better depends on the task, and the federation chooses between
them without decrypting either. Each client evaluates both, under encryption, on
data it held back from its own training, and reports encrypted per-class counts.
The server combines the counts into one encrypted score per arrangement. The
federation decrypts one value, the index of the arrangement it adopts. \Cref{fig:selection} shows the steps.

\subsubsection{Prior-Weighted Estimator}
A simpler rule lets each client measure both arrangements on its held-out data
and vote. That rule estimates the wrong quantity. It selects the personal
arrangement in every cell and is thus wrong wherever
the shared arrangement is better (\cref{sec:exp-select}).
\Cref{alg:select} gives the full procedure. We score both arrangements by
the expected accuracy of a prediction on an example drawn from the federation's
own class distribution,
\begin{equation}
\label{eq:estimator}
E \;=\; \sum_{c=1}^{\Cc} p_c\,a_c,
\qquad
p_c \;=\; \frac{\sum_j n_{j,c}}{\sum_j \nj},
\end{equation}
where $a_c$ is the accuracy on class $c$ and $p_c$ is that class's share of the
federation's examples, taken from the per-class counts already used in
\cref{eq:aggregate}. The shared arrangement is one model, and per-class
evidence from different clients therefore concerns the same object and can be
combined. In the personal arrangement, each client's evidence concerns only its
own model, and classes it does not hold contribute nothing. Scoring those classes
at zero makes $E$ a lower bound for the personal arrangement. The estimator has no fitted threshold. It chooses well in every cell we measured (\cref{sec:exp-select}). A simpler rule is sometimes right where the estimator is not, but none is as consistent. The estimator also runs entirely under encryption.

\begin{algorithm}[t]
\caption{Choosing an arrangement without decrypting either. The federation learns only the index of the winner.}
\label{alg:select}
\small
\begin{algorithmic}[1]
\For{each arrangement $m$ and each client $j$}
  \For{each held-out example $(x,y)$ of client $j$}
    \State obtain $\mathsf{Enc}(\ell)$ and $\mathsf{Enc}(\max_c \ell_c)$ as in
           \cref{alg:serve}
    \State select $\mathsf{Enc}(\ell_{y})$ by one product with $\mathsf{Enc}(e_y)$
           \Comment{client $j$ sends its label as an encrypted one-hot vector $e_y$}
    \State add $\mathsf{Enc}(e_y)\cdot\mathsf{Enc}\big(\mathbf{1}[\ell_y = \max_c \ell_c]\big)$
           to the per-class accumulator \Comment{$y$ stays encrypted}
  \EndFor
  \State pad the upload of client $j$ to a public size with zero labels
\EndFor
\State combine the accumulators into $\mathsf{Enc}(E_m)$ by \cref{eq:estimator},
       using public scalars \Comment{one level}
\State compare the two encrypted scores
\State zero every slot of the comparison but the first, with a plaintext mask
\State \Return the decrypted index of the larger \Comment{one value}
\end{algorithmic}
\end{algorithm}

\subsubsection{Accuracy Estimation per Class}

The estimator needs an estimate of accuracy on classes a client has not
seen, and its only local proxy is accuracy on classes it has barely seen. A
held-out set carved uniformly almost never contains a class of which the client
holds one or two examples, and that proxy is then unmeasurable. Each client
therefore reserves at least one held-out example from every class it holds before
carving the remainder at random. A client with a single example of a class puts
that example in the held-out set. Its adapter then never trains on that class,
and the estimator thus obtains the measurement it needs. The holdout costs little training data and leaves the cryptographic protocol unchanged.

\subsubsection{Selection Cost and Resulting Disclosure}
Apart from the comparisons, every step of the selection needs at most one
homomorphic multiplication, and therefore at most one level. The comparisons use
the same circuit as the argmax of \cref{sec:serving}. One value is decrypted for the whole federation. The
per-client per-class accuracies are not decrypted. The selection runs within the
threat model of \cref{sec:method}. Every step is a computation on ciphertexts or
the decryption of the index $a^{\star}$, which $\Fhe$ announces. Against deviating
clients, $a^{\star}$ is one more output that depends on the honest data, and
\cref{def:contentgame} therefore compares only datasets that lead to the same index.

\subsection{Test-Time Query Serving}
\label{sec:serving}

\begin{figure*}[t]
\centering
\includegraphics[width=\textwidth]{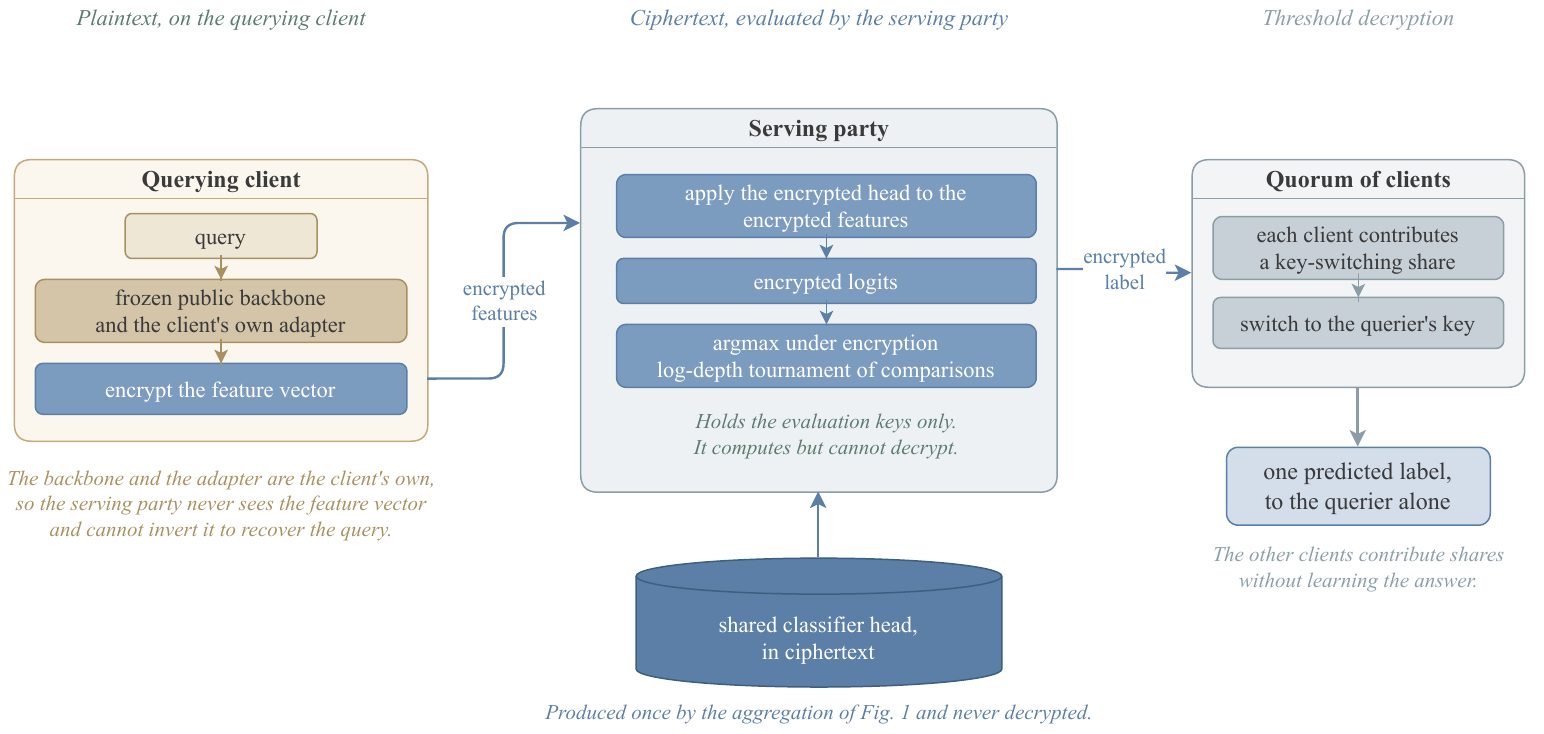}
\caption{Serving one test-time query. The querier computes the features with its own backbone and adapter and encrypts them. The server thus never sees the feature vector and cannot invert it to recover the query. The server holds the evaluation keys only. It applies the encrypted head, obtains encrypted logits, and reduces them to the index of the largest entry by a tournament of homomorphic comparisons. A quorum of clients then re-encrypts that index under the querier's key. Quorum members other than the querier hence contribute shares without learning the index.}
\label{fig:serving}
\end{figure*}

\Cref{fig:serving} shows the serving path. HE-OFT serves a trained quantity that is linear in the features. Secure transformer inference addresses the problem of serving a trained quantity placed inside the backbone. An encrypted shared adapter composes
with those systems at the cost they report, and we place that case outside our
scope. Those systems evaluate a plaintext model on encrypted inputs. Here the shared
weights are encrypted as well, and each adapter site would thus need a
ciphertext-by-ciphertext product. The nonlinearities dominate the cost in that
literature and stay unchanged, and hence the added cost is one low-rank product per
site.

\subsubsection{Test-Time Query Protocol}
The querier computes the features of its query with its own
backbone and adapter, encrypts them, and sends the ciphertext to the server. The
server applies the encrypted shared head to the encrypted features, obtains
encrypted logits, and reduces them to the index of the largest entry under
encryption. \Cref{alg:serve} gives the steps.

\subsubsection{Test-Time Query Encryption}

Sending the features in plaintext would make the head application
cheaper. We encrypt them because features can be inverted. A party holding
$\varphi_j(x)$ and the public backbone can reconstruct an approximation of $x$,
and plaintext features would therefore give the server the client's input.
Encrypted features make the server apply the head ciphertext by ciphertext, and
\cref{sec:exp-cost} reports the cost.

\subsubsection{Label-Only Output}

Returning logits would expose the head. The head is a linear map on
features the client computes itself. A client that reads logits for feature
vectors of its choosing therefore recovers the head by solving a linear system,
in as many queries as the feature dimension~\cite{tramer2016stealing}. The same
object, the final projection layer of a production transformer, has been
recovered through a deployed interface for a few tens of dollars. That interface
exposed log-probabilities and a logit bias, and its operators then restricted
that combination~\cite{carlini2024stealing}. Returning only the index of the
largest entry imposes the same restriction in the protocol itself. This restriction places an adversary in the hard-label
setting, where functionally equivalent extraction has only recently been achieved
for general networks~\cite{chen2024hardlabel,carlini2025hardlabel}. Both attacks
target networks with hidden layers and spend their effort on the hidden neurons. Our shared quantity
has no hidden neurons, and the restriction thus helps less here than those results suggest. The restriction replaces the linear solve with a search for decision boundaries, at a
cost logarithmic in the precision demanded. It also leaves the head determined
only up to a positive scale and a vector added to every row. Extraction remains
possible, and \cref{sec:threat} states the guarantee as a bound on queries and
not as a hardness claim.

\subsubsection{Multiparty Decryption Among the Clients at Test Time}
Decryption needs a quorum, and clients other than the querier take part in
every answer. They contribute key-switching shares and read nothing, because the
quorum switches the ciphertext to the querier's key without decrypting the
label.

\begin{algorithm}[t]
\caption{Answering one query. The shared head remains a ciphertext throughout.}
\label{alg:serve}
\small
\begin{algorithmic}[1]
\Require query $x$ at client $j$ and $\mathsf{Enc}(\thstar)$ held by the server
\State client $j$ computes $\varphi_j(x)$ with its own backbone and adapter
       \Comment{plaintext, local}
\State client $j$ sends $\mathsf{Enc}(\varphi_j(x))$
\State \textbf{server} computes
       $\mathsf{Enc}(\ell) \gets \mathsf{Enc}(\thstar)\,\mathsf{Enc}(\varphi_j(x))$
       \Comment{ciphertext by ciphertext}
\State \textbf{server} reduces $\mathsf{Enc}(\ell)$ to
       $\mathsf{Enc}(\hat{y})$, $\hat y=\arg\max_c \ell_c$, by a tournament of
       homomorphic comparisons~\cite{cheon2020comparison}
\State \textbf{server} zeroes every slot of $\mathsf{Enc}(\hat y)$ but the first,
       with a plaintext mask \Comment{only the label remains}
\State a quorum of clients key-switches $\mathsf{Enc}(\hat y)$ to client $j$'s
       public key
\State \Return $\hat y$, readable by client $j$ alone
\end{algorithmic}
\end{algorithm}

\subsubsection{Test-Time Query Cost}

The argmax is the only operation in HE-OFT that costs more than one
level, because it compares logits. Evaluated
as a sequential fold of $\Cc-1$ comparisons, it costs minutes for a large label
space. Packing the logits into one ciphertext and reducing them by a tournament
lowers the number of sequential comparisons from $\Cc-1$ to
$\lceil\log_2\Cc\rceil$, and \cref{sec:exp-cost} reports the measured result.
Homomorphic evaluation
under a collectively generated key works as under a single key, and the two
differ only in key generation and decryption. Single-key results for the argmax therefore carry
over. The server holds the collective public key, the evaluation keys, and
ciphertexts. It can thus compute but not decrypt, and it learns neither the model
nor any client's data. The server is trusted to follow the protocol (\cref{prop:nogate}) and to stay
available, and verifiable
homomorphic encryption applies where honest computation must be
certified~\cite{viand2023verifiable}.

\section{Security}
\label{sec:security}

\label{sec:threat}

The participants are $\Nc$ clients and a server. A static, semi-honest adversary
corrupts the server together with up to $\tc-1$ clients. The corrupted parties
follow the protocol, pool their views, and try to infer private information from
what they observe. In \cref{sec:malicious}, we let up to $\tc-1$ clients deviate arbitrarily while the server stays honest. We prove in \cref{sec:malicious-ext} that this assumption is necessary whenever the
clients decrypt ciphertexts that the server presents.

The server sees ciphertexts under the collective public key, the
public per-client sample counts, and the evaluation keys. It does not hold the decryption
key. The server also receives
the encrypted query features.

Three properties of HE-OFT restrict what a coalition can compute.
Every adapter stays with its owner, and HE-OFT forms no aggregate from which a
coalition could subtract its own contributions. The shared head stays a
ciphertext throughout, and a coalition thus cannot perform the same subtraction
on it. The server inverts the per-class totals under encryption and never
decrypts them. HE-OFT hence discloses no plaintext value from which a coalition could
read the remaining client's class histogram.

The guarantee covers disclosure only, and we do not prove that the
histogram cannot be recovered. A client's counts also shape the displacement it
uploads. For a linear map trained under cross-entropy, the row of a class grows
in magnitude with the number of examples of that class the client
held~\cite{wainakh2022user}. Those
displacements and the merged head are ciphertexts, whose plaintexts IND-CPA
security hides. \Cref{sec:c2c} measures what an extracted copy of the head
reveals. Inverting the totals under encryption removes
the only channel that would have disclosed the histogram in plaintext.

\subsection{The Ideal Functionality}
\label{sec:ideal}

\Cref{func:ideal} defines the ideal functionality $\Fhe$ against which we
measure HE-OFT. $\Fhe$ takes its form from the secure aggregation functionality of
ELSA~\cite{rathee2023elsa}. As in FULLSA~\cite{karakoc2024fullsa}, the server is a
party outside $\Fhe$. It receives only the selected index and the phase signals,
and the semi-honest adversary of \cref{sec:threat} may corrupt it.

\begin{functionality}[t]
\caption{Ideal functionality $\Fhe$}
\label{func:ideal}
\footnotesize
\noindent\textbf{\emph{Parameters.}} The client count $\Nc$, the decryption
threshold $\tc$, the per-client query allowance $Q$, and the label space size
$\Cc$. The parties are clients $P_1,\dots,P_{\Nc}$ and a server $S$, which is not
part of $\Fhe$. The functionality holds a
counter $q_j$ for each client, each initialized to $0$, and a store that is
initially empty and that $\Fhe$ never sends to any party.

\noindent\textbf{\emph{Inputs.}} Each client $P_j$ inputs its dataset $\Dj$, and
later inputs queries $x$. The server $S$ inputs a selection request.

\noindent\textbf{\emph{Outputs.}} Every party receives the selected index
$a^{\star}$ and the phase signals. Client $P_j$ receives one label for each of
its first $Q$ queries, and $\bot$ afterwards.

\noindent\textbf{\emph{Steps.}}
\begin{enumerate}\itemsep2pt \parskip0pt \topsep3pt
\item \emph{Training and aggregation.} On $(\mathsf{Train},\Dj)$ from every
  client, run the local training of \cref{sec:split} on each $\Dj$ in both
  servable arrangements and store the two shared heads $\thstar_{A}$ and
  $\thstar_{B}$ of \cref{eq:aggregate}. Send
  $(\mathsf{Done},\mathsf{train})$ to every party.
\item \emph{Selection.} On $(\mathsf{Select})$ from $S$, evaluate the estimator
  of \cref{eq:estimator} on each of the two stored heads. Send the index
  $a^{\star}\in\{A,B\}$ of the larger and
  $(\mathsf{Done},\mathsf{select})$ to every party.
\item \emph{Serving.} On $(\mathsf{Query},x)$ from client $P_j$, return $\bot$
  to $P_j$ if $q_j\ge Q$. Otherwise set $q_j\leftarrow q_j+1$, return
  $\hat y=\arg\max_c\,(\thstar_{a^{\star}}\,\featx^{a^{\star}}_j(x))_c$ to
  $P_j$ alone, and send $(\mathsf{Done},\mathsf{query},j)$ to $S$. Here
  $\featx^{A}_j=\featx$ is the public backbone and
  $\featx^{B}_j=\featx_j$ is the backbone composed with $P_j$'s adapter.
\end{enumerate}

\noindent\textbf{\emph{Leakage.}} $\Fhe$ reveals $\Leak$ to the adversary,
consisting of the public parameters, the per-client sample counts $\nj$, the
announced index $a^{\star}$, and the fact that each $\mathsf{Done}$ message was
sent.
\end{functionality}

The leakage includes the sample counts because HE-OFT uses them as public
scalars. We prove security relative to $\Fhe$ with this leakage.

A protocol that realizes $\Fhe$ inherits three of its properties. No party
holds either shared head. The federation as a whole learns one index. Each client
learns one label for each query it submits, up to its allowance $Q$, together
with the leakage $\Leak$.

\subsection{Security Against Semi-Honest Parties}
\label{sec:semihonest}

\begin{definition}[Semi-honest security]
\label{def:semihonest}
Let $\mathcal{A}$ corrupt a set of parties that may contain the server and at
most $\tc-1$ clients. Write
$\mathsf{view}_{\mathcal{A}}^{\Pi}(\Dset)$ for the corrupted parties' joint view
of a protocol run on datasets $\Dset$. The protocol $\Pi$ securely realizes
$\Fhe$ with leakage $\Leak$ if there is a probabilistic polynomial-time
simulator $\mathcal{S}$ such that, for every $\Dset$,
\[
  \mathsf{view}_{\mathcal{A}}^{\Pi}(\Dset)
  \;\stackrel{c}{\approx}\;
  \mathcal{S}\big(\Leak,\ \{\Dj\}_{j\ \mathrm{corrupt}},\ \mathsf{out}_{\mathcal{A}}\big),
\]
where $\mathsf{out}_{\mathcal{A}}$ collects the answers $\Fhe$ returns to the
corrupted clients.
\end{definition}

The simulator receives the answers because HE-OFT delivers them. A
definition that withheld them would be unachievable, and the resulting theorem
would not apply to a system that answers queries.

\begin{theorem}[Semi-honest security]
\label{thm:semihonest}
Assume the $\tc$-out-of-$\Nc$ threshold CKKS scheme is secure under chosen
plaintext attack (IND-CPA) against an
adversary holding $\tc-1$ key shares. Assume also that a key switch smudges above the
noise of the ciphertext it switches. Then HE-OFT securely realizes $\Fhe$
with leakage $\Leak$ against any semi-honest adversary corrupting the server
and at most $\tc-1$ clients, in the sense of
\cref{def:semihonest}.
\end{theorem}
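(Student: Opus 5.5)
The proof will be a simulation argument organised as a hybrid sequence over the three phases of \cref{alg:train}, \cref{alg:select} and \cref{alg:serve}. We fix a corrupted set consisting of the server and a set $T$ of at most $\tc-1$ clients. The simulator $\mathcal{S}$ receives $\Leak$, the corrupted datasets and the answers returned to corrupted queriers, and it first runs distributed key generation honestly on behalf of the honest clients. Because $|T|\le\tc-1$, the adversary then holds at most $\tc-1$ secret-key shares. In the view, $\mathcal{S}$ replaces every ciphertext produced by an honest client with an encryption of zero (padded to the public sizes): the uploads $\mathsf{Enc}(\mathrm{diag}(g_j)\disp)$ and $\mathsf{Enc}(g_j)$, the encrypted one-hot labels and the held-out scoring ciphertexts of the selection, and the encrypted features of honest queries. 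Every ciphertext the server derives is a deterministic public homomorphic function of ciphertexts already in the view, of the evaluation keys and of public scalars, including the sample counts $\nj$ that $\Leak$ contains. The simulator therefore recomputes these ciphertexts itself. The corrupted clients' own contributions are computed from their real datasets.

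The hybrids run in order. $H_0$ is the real view. $H_1$ replaces the real output of each collective decryption or key switch by a simulated one, which is where the theorem's assumptions enter. The decryption of the selection index is simulated by programming honest decryption shares so that the combined result equals $a^{\star}$, which $\Leak$ contains. For a key switch to a corrupted querier, the honest shares are programmed so that the switched ciphertext decrypts to the label $\hat y$ taken from $\mathsf{out}_{\mathcal{A}}$. For a key switch to an honest querier, the adversary sees only shares, and these are simulated as switches of an encryption of zero. Each share is computed as $\mathsf{share} = \mathsf{target} - \sum_{\text{corrupt}}\mathsf{share}_i$ with fresh smudging noise. The smudging assumption makes this statistically close to the real shares, since the smudging noise drowns the difference between the real ciphertext's noise, which depends on the honest key shares and data, and the simulated one. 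The collective refreshes inside the reciprocal circuit, and any used during selection, are handled the same way: they are re-encryptions whose honest shares are simulated after smudging, with the refreshed plaintext never revealed. $H_2$ then replaces honest-party plaintexts by zeros one ciphertext at a time. Each step is a reduction to threshold IND-CPA with $\tc-1$ corrupted shares. Once $H_1$ holds, no step of the view uses the honest secret-key shares, so the reduction can run the rest of the simulation from public keys alone. $H_2$ is exactly the output of $\mathcal{S}$.

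Finally, the answers and $a^{\star}$ must match the ideal outputs, since the simulated view is fed the values that $\Fhe$ produced. Here we argue correctness: CKKS approximation error in the reciprocal and the comparison tournament is small enough that the decrypted argmax and index equal the ideal ones, up to a negligible failure or with exact rounding assumed. This is the step I expect to be the main obstacle. The first difficulty is making the decryption-share simulation rigorous: we need the smudging bound for multi-level ciphertexts after refreshes, and the real protocol's approximate outputs must coincide with the exact outputs of $\Fhe$. I would first assume correctness with overwhelming probability as part of the CKKS instantiation and then apply smudging lemma-style statistical distance bounds per share, summed over the polynomially many decryptions, including the at most $\Nc Q$ queries.
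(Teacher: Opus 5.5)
Your plan has the same shape as the paper's proof. The simulator is the same: encryptions of zero for every honest ciphertext, server ciphertexts recomputed from them, and subprotocol simulators fed $a^{\star}$ and the corrupted queriers' labels. The ordering is also the same: every refresh, decryption and key switch is simulated before any reduction to IND-CPA, so the reduction holds no honest key share.

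\textbf{The gap.} One step fails as you justify it: the key switch to an \emph{honest} querier $j$.
\begin{itemize}
\item The corrupted server sums all the quorum's shares. The switched ciphertext, an encryption of the real label $\hat y$ under client $j$'s public key, is therefore in the view. It is not true that the adversary ``sees only shares'' in any sense that hides this output.
\item Programming the honest shares to combine into a switch of an encryption of zero changes that plaintext from $\hat y$ to $0$. The two views are then statistically far, so no smudging argument can give the statistical closeness you claim.
\item The programmed share differs from a real one by an offset the size of the scaled label. A smudging term sized to the ciphertext noise does not drown that offset. Only the pseudorandomness of an encryption under client $j$'s key hides it.
\item Your $H_2$ does not cover this either, because it reduces to threshold IND-CPA under the collective key.
\end{itemize}
The paper splits this step in two. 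Its $\mathsf{H}_1$ simulates the key-switch shares from the \emph{real} output. A separate $\mathsf{H}_2$ then replaces each output switched to an honest client by an encryption of zero under that client's key. This uses IND-CPA under that single key, whose secret appears nowhere else once $\mathsf{H}_1$ holds, and costs $\Nc Q\,\varepsilon'_{\mathrm{cpa}}$. You need to insert this hybrid between your two.

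\textbf{Smaller points of the same kind.}
\begin{itemize}
\item The outputs of the collective refreshes are not deterministic public functions of the view. They encrypt data-dependent intermediates of the reciprocal circuit, so they must be among the ciphertexts your last hybrid replaces. The paper replaces them by encryptions of zero in refresh form, whose second component is a public random element, and derives that indistinguishability from RLWE.
\item The IND-CPA game your reduction uses must publish the relinearization, rotation and bootstrapping keys. The paper states this as a circular-security assumption, not a consequence of RLWE.
\item The smudging hypothesis must be read to cover the refresh masks, which hide the whole decrypted value and not just its noise.
\end{itemize}

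\textbf{On correctness.} The issue you call the main obstacle is not where the paper's proof spends its effort. The paper simulates from the real outputs and identifies them with those of $\Fhe$ without comment. Assuming exact correctness with overwhelming probability, as you propose, makes explicit an assumption the paper leaves implicit.
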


\begin{proof}[Proof]

We write $T$ for the corrupted clients, with $|T|\le\tc-1$, and $H$ for the
honest ones. We treat the case in which the adversary also corrupts the server.
Without the server, the adversary's view is a part of this one, and the
simulator discards the rest.

\emph{The assumptions, stated precisely.} We read the IND-CPA assumption in the
following game. The challenger and the adversary run the distributed key
generation, with the adversary playing the clients in $T$ and the challenger the
clients in $H$. The challenger publishes the collective public key together with
the evaluation keys, which comprise the relinearization, rotation and
bootstrapping keys. The adversary submits pairs of plaintexts and receives an
encryption of one plaintext of each pair, selected by a hidden bit. We write
$\varepsilon_{\mathrm{cpa}}$ for the advantage of a polynomial-time adversary
with one challenge pair. IND-CPA in this game is a circular-security
assumption. It is standard for CKKS and is not implied by RLWE. The challenger also answers challenges in the form a refresh outputs. Such a
ciphertext is an encryption under the collective secret key whose second
component is a fresh public random element. Indistinguishability in this form
follows from RLWE, and we include it in $\varepsilon_{\mathrm{cpa}}$. The same
game for a single key, without distributed key generation, defines IND-CPA under a client's own public key, and
we write $\varepsilon'_{\mathrm{cpa}}$ for its advantage.

After key generation, HE-OFT runs three interactive subprotocols of the
multiparty scheme. They are the collective refresh, the collective decryption,
and the collective key switch to a client's public
key~\cite{mouchet2021multiparty,mouchet2023threshold}.
In each, every participating client adds a smudging term to its share. Each
subprotocol has a simulator that produces the honest clients' shares from the
subprotocol's output and from the corrupted clients' key shares. The smudging
assumption of the theorem states that this simulation is close.

Formally, let $e$ be the noise of the ciphertext a subprotocol processes and $E$ the smudging
term. We require the statistical distance between $E$ and $E+e$ to be at most
$\varepsilon_{\mathrm{sm}}$ for every ciphertext the protocol processes. For
Gaussian smudging of standard deviation $\sigma$ on each of $n$ coefficients and
noise with $\|e\|_\infty\le B$, this distance is at most $nB/\sigma$, and for discrete Gaussians whose parameter
exceeds the smoothing parameter of $\mathbb{Z}$ the same bound holds up to a
negligible additive term. In the refresh, the honest masks play the role of $E$,
and $e$ is the whole decrypted value, the plaintext together with its noise. We
read the smudging hypothesis as covering this case. The implementation sizes
the masks for $128$ bits of statistical security.

We write $K\varepsilon_{\mathrm{sub}}$ for the distance between the real runs of
all $K$ subprotocols and their joint simulation. It is at most
$K\varepsilon_{\mathrm{sm}}$ plus a term negligible under the ring learning with
errors (RLWE) assumption on which the scheme rests. The RLWE term covers all runs
jointly, since every run reuses the honest key shares, and we take this joint
simulation from~\cite{mouchet2021multiparty,mouchet2023threshold}.

\emph{Counting.} Let $K$ be the number of subprotocol runs. They are the two
refreshes of the encrypted reciprocal, the decryption of $a^{\star}$, and one key
switch for each answered query, at most $\Nc Q$ in all. Let $M$ be the number of
ciphertexts under the collective key that honest clients encrypt or that the
refreshes output. They are the training uploads, the selection uploads, the
features of the honest clients' queries, and the two refreshed ciphertexts. \Cref{alg:select}
pads each client's selection upload to a number of ciphertexts that the public
parameters fix, and it masks the final comparison to one slot before the
decryption. The public parameters and the number of queries therefore fix $K$
and $M$, and both are polynomial in $\lambda$.

\emph{The simulator.} $\mathcal{S}$ receives $\Leak$, the datasets of the clients
in $T$, and the answers $\mathsf{out}_{\mathcal{A}}$.
\begin{enumerate}
\item \emph{Key generation.} $\mathcal{S}$ samples key shares for the clients in
  $H$ and runs the distributed key generation with the adversary. It also
  samples a key pair for each honest client, under which that client receives its
  answers.
\item \emph{Training.} For each ciphertext an honest client uploads,
  $\mathcal{S}$ sends an encryption of zero under the collective key. The number
  of these ciphertexts follows from $\Cc$, the feature dimension and the ring
  degree. $\mathcal{S}$ computes the server's additions and products itself. At
  each of the two refreshes, $\mathcal{S}$ takes an encryption of zero in the form a
  refresh outputs as the refreshed ciphertext and produces the honest shares with the refresh simulator.
\item \emph{Selection.} $\mathcal{S}$ sends encryptions of zero in place of the
  honest clients' encrypted features and encrypted one-hot labels. It evaluates
  the selection circuit, whose level restorations use the public bootstrapping
  keys and need no client. It reads $a^{\star}$ from $\Leak$ and produces the
  honest decryption shares with the decryption simulator on output $a^{\star}$.
\item \emph{Serving.} A corrupted client produces the encrypted features of its
  own queries. $\mathcal{S}$ evaluates the serving circuit on them and produces
  the honest key-switching shares with the key-switch simulator, on the output
  that decrypts under the querier's key to the label in
  $\mathsf{out}_{\mathcal{A}}$. For a query of an honest client $j$, which
  $\mathcal{S}$ learns from the signal $(\mathsf{Done},\mathsf{query},j)$ in
  $\Leak$, it sends an encryption of zero as the encrypted features, evaluates
  the circuit, and runs the key-switch simulator on an encryption of zero under
  the public key of client $j$.
\end{enumerate}
$\mathcal{S}$ runs in polynomial time. Its output depends on the honest clients'
data only through $\Leak$ and $\mathsf{out}_{\mathcal{A}}$.

\emph{Hybrids.} We pass from the real view $\mathsf{H}_0$ to the output of
$\mathcal{S}$ in three steps. The order matters. A reduction to IND-CPA does not
hold the honest key shares, and it cannot compute real shares for ciphertexts that
depend on its challenge. The step that removes those shares therefore comes
first.

In $\mathsf{H}_1$, every subprotocol run is replaced by its simulator, applied to
the real output of that run. After this step the honest key shares enter the view
only through key generation. The joint replacement changes the view by at most
$K\varepsilon_{\mathrm{sub}}$.

In $\mathsf{H}_2$, the output of every key switch to an honest client is replaced
by an encryption of zero under that client's public key. The honest client's
secret key appears nowhere else in $\mathsf{H}_1$. A distinguisher between the two
hybrids thus yields an adversary against IND-CPA under that key, and the at most
$\Nc Q$ replacements cost at most $\Nc Q\,\varepsilon'_{\mathrm{cpa}}$.

In $\mathsf{H}_3$, the $M$ ciphertexts under the collective key that honest clients
encrypt or that the refreshes output are replaced, one at a time, by encryptions
of zero. The reduction for one replacement runs key generation through the
IND-CPA game and places the challenge at the position being replaced. It produces
every other message itself. It holds every dataset, and it therefore computes the
plaintexts of the other ciphertexts and the outputs that the subprotocol
simulators need, namely $a^{\star}$, the labels of the corrupted clients' queries
and the refreshed values. It uses no honest key share, by $\mathsf{H}_1$. The
server's computations are public functions of ciphertexts and of the evaluation
keys. The $M$ replacements cost at most $M\varepsilon_{\mathrm{cpa}}$.

$\mathsf{H}_3$ is the output of $\mathcal{S}$. Every honest ciphertext is an
encryption of zero, the server's values are computed from these, and the
subprotocol simulators receive $a^{\star}$, the answers in
$\mathsf{out}_{\mathcal{A}}$, and encryptions of zero. For every polynomial-time
distinguisher, the gap between the real view and the simulated one is therefore at
most
\[
  K\varepsilon_{\mathrm{sub}}+\Nc Q\,\varepsilon'_{\mathrm{cpa}}+M\varepsilon_{\mathrm{cpa}},
\]
which is negligible when $\varepsilon_{\mathrm{sm}}$,
$\varepsilon_{\mathrm{cpa}}$ and $\varepsilon'_{\mathrm{cpa}}$ are. The bound
says nothing when $\varepsilon_{\mathrm{sm}}$ is not negligible, and
\cref{sec:cpad} shows that our measured configuration is such a case.

\end{proof}

\subsection{Input Privacy Against Malicious Clients}
\label{sec:malicious}

\Cref{thm:semihonest} assumes that every party follows the protocol. We now let
up to $\tc-1$ clients deviate arbitrarily and bound what they learn about the data
of a client that follows the protocol. We do not claim correctness against such
clients.

\begin{definition}[Content game]
\label{def:contentgame}
The adversary $\mathcal{A}$ corrupts $\tc-1$ clients. The server is honest. At least one client $h$ stays honest.
\begin{enumerate}
\item $\mathcal{A}$ outputs two datasets $D_0$ and $D_1$ with $|D_0|=|D_1|$ that lead to the same selected index $a^{\star}$.
\item The challenger samples $b\in\{0,1\}$ and gives $D_b$ to client $h$.
\item HE-OFT runs. $\mathcal{A}$ may deviate arbitrarily in the values its
  clients upload, in the shares they contribute, and in the queries they ask.
\item $\mathcal{A}$ outputs $b'$. Its advantage is $|\Pr[b'=b]-\tfrac12|$.
\end{enumerate}
\end{definition}

The equal-size restriction is necessary. The per-client sample counts are
public. An adversary free to choose datasets of different sizes determines $b$
from the counts alone.

\begin{theorem}[Input privacy]
\label{thm:malicious}
Assume the conditions of \cref{thm:semihonest}, with key generation run over
committed shares. Then the advantage of
$\mathcal{A}$ in \cref{def:contentgame} is at most
$\negl(\lambda)+\delta(Q_{\mathrm{tot}})$, where
$Q_{\mathrm{tot}}=(\tc-1)Q$ and $\delta$ is the advantage available from
$Q_{\mathrm{tot}}$ label queries to the served head.
\end{theorem}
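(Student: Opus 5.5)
The plan is a hybrid argument that reuses the simulator and hybrids of \cref{thm:semihonest}, with two changes. First, the corrupted clients now choose their messages adaptively. Second, the server is honest, so the adversary's view consists only of what reaches the corrupted clients. That view contains the collective and individual public keys, every ciphertext broadcast to or forwarded to the clients, the shares exchanged in refreshes, decryptions and key switches, the announced index $a^{\star}$, and the labels returned to corrupted queriers.

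The first step uses committed key generation. It guarantees that the collective public key is well formed and that the corrupted clients' secret shares are fixed and extractable. A reduction can then treat the honest share as the challenger's share in the IND-CPA game. Committed shares also let the subprotocol simulators of the proof of \cref{thm:semihonest} produce client $h$'s shares from the subprotocol output and the extracted corrupted shares, even when the corrupted clients' own shares are malformed. Because the honest server performs all homomorphic evaluation, malformed shares or uploads from corrupted clients can affect only correctness, which we do not claim.

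The argument then fixes $b$ and moves through the hybrids.
\begin{enumerate}
\item Replace each run of the collective refresh, collective decryption and key switch in which $h$ participates by its simulator, at cost $K\varepsilon_{\mathrm{sub}}$.
\item Replace every ciphertext that $h$ encrypts, and every refresh output, by an encryption of zero under the collective key. This includes $h$'s training upload, its selection upload padded to public size, and its query features. The cost is $M\varepsilon_{\mathrm{cpa}}$ by the same one-at-a-time reduction.
\item Replace the key-switch outputs addressed to $h$ by encryptions of zero under $h$'s key, at cost $\Nc Q\,\varepsilon'_{\mathrm{cpa}}$.
\end{enumerate}

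The one new point is that the reduction must answer the corrupted clients' adaptive queries. In each query the honest server runs the serving circuit on adversarial features, and a quorum that includes $h$ key-switches the result. To simulate $h$'s share, the reduction needs the true label $\arg\max_c(\thstar_{a^{\star}}x)_c$ for those features. This label is the only remaining dependence on $D_b$ besides $a^{\star}$ and $|D_b|$, and the challenge fixes both of those to be equal for the two datasets. The reduction obtains these labels from an oracle for the served head.

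After these hybrids, the view depends on $b$ only through the at most $(\tc-1)Q$ labels returned to corrupted queriers. The per-client allowance $Q$ is enforced by the honest server, so deviating clients cannot exceed it. Any distinguisher on this final hybrid is therefore a $Q_{\mathrm{tot}}$-query label adversary against the served head, and its advantage is at most $\delta(Q_{\mathrm{tot}})$ by definition.

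The expected obstacle is the first hybrid under malicious behaviour. The smudging simulators of \cite{mouchet2021multiparty,mouchet2023threshold} assume well-formed inputs. An adversarial client could submit a ciphertext whose noise is large, or one correlated with $h$'s share, which breaks the bound $\|e\|_\infty\le B$ used in the smudging estimate. The plan handles this in two ways. The honest server evaluates circuits only on inputs of public, fixed format, and it refreshes with its own bootstrapping where needed. The smudging hypothesis of \cref{thm:semihonest} is then read, as stated there, as covering every ciphertext the protocol processes. If that reading is too strong, the fallback is to require that honest clients contribute shares only to ciphertexts produced by the honest server, whose noise bound is a function of public parameters; \cref{prop:nogate} shows this is exactly why the server must be honest.

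Summing the terms gives advantage at most $K\varepsilon_{\mathrm{sub}}+M\varepsilon_{\mathrm{cpa}}+\Nc Q\varepsilon'_{\mathrm{cpa}}+\delta(Q_{\mathrm{tot}})$, which is $\negl(\lambda)+\delta(Q_{\mathrm{tot}})$.
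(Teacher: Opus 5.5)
Your hybrid skeleton matches the paper's: simulate the subprotocols, replace the key-switch outputs and the collective-key ciphertexts by encryptions of zero, then reduce to $\delta$ using the fixed $a^{\star}$, the equal sizes and the server-enforced allowance. The gap is the step where the reduction ``obtains these labels from an oracle for the served head.''

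\textbf{Missing plaintext extraction.} A deviating client's query reaches the reduction only as a ciphertext under the collective key. The head $\thstar_{a^{\star}}$ is built partly from the deviating clients' uploads, which are also only ciphertexts. In the collective-key hybrid the reduction plays the IND-CPA game without the honest key share. Extracting the corrupted clients' key shares therefore does not let it decrypt these ciphertexts. It cannot compute the labels, the refreshed values or the head that the simulators and the $\delta$-oracle need. An oracle that accepts encrypted features would be a decryption oracle, and IND-CPA says nothing against one.

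Your claim that malformed uploads ``can affect only correctness'' is where the gap hides, and it is false because the scheme is malleable. A deviating client can build its query homomorphically from a ciphertext it is shown in a refresh run, for instance one encrypting a function of the per-class totals. It subtracts an encryption of a chosen $v$ and places the result in one feature slot. The honest server's argmax then reports the sign of that difference, and a binary search recovers a quantity that depends on $h$'s class histogram. Such answers are not label queries on features the adversary knows, and they can separate $D_0$ from $D_1$.

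The paper closes this with an explicit hypothesis, its first reading. Every upload and query of a deviating client must be an encryption of a plaintext the client knows and a reduction can extract; a zero-knowledge proof of plaintext knowledge would ensure this, and HE-OFT does not include one. The theorem is then stated only for clients whose messages satisfy it. With extraction in hand, the reduction can also compute the shift a deviating refresh share induces and keep it in the replacement.

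\textbf{Two smaller gaps.} First, a query whose features lie outside the domain for which the sign approximation is parameterized makes the argmax circuit return a polynomial in the logits rather than a label. Your last step is then not a reduction to label queries unless $\delta$ is restricted to in-domain features, as the paper's third reading does. Second, your smudging fallback does not work. An honestly evaluated ciphertext computed from a high-noise malformed upload has noise that no public parameter bounds. Moreover, the reciprocal at aggregation is restored by collective refresh, not by server bootstrapping. Only the reading you state first covers this: the smudging condition for every processed ciphertext, or a proof of bounded noise.
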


\begin{proof}[Proof]

We write $T$ for the $\tc-1$ deviating clients and $h$ for the honest client
that holds $D_b$. The server and every client outside $T$ follow the protocol.

\emph{Hypotheses on deviating messages.} A deviating client can send messages that
no honest client produces. Three readings fix what the theorem covers. First, we
read key generation over committed shares as using extractable commitments, so
that a reduction recovers each deviating client's key shares. Each deviating
client fixes its message of each round of key generation before it sees an
honest message of that round. We also read the threshold sharing as verifiable,
so that the shares a deviating client deals determine one key share. Every upload
and query of a deviating client is likewise an encryption of a plaintext that the
client knows and that a reduction can extract. The
collective public key and the evaluation keys then contain the honest clients'
contributions as independent samples, and the IND-CPA game of the proof of
\cref{thm:semihonest} applies with the deviating clients' shares chosen by the
adversary.

Second, a malformed upload can carry more noise than an honest
encryption, and a ciphertext computed from it can then violate the smudging
condition when it is refreshed, decrypted or switched. We read the smudging
condition of the theorem as holding for every ciphertext the protocol processes,
including those computed from deviating uploads. Third, a malformed query can
encrypt a feature vector outside the domain for which the comparisons of the
argmax are parameterized. The comparisons then return a polynomial in the logits
instead of a sign, and the returned value is not a label. We read $\delta$ as
covering queries whose features lie in that domain. All three readings hold if
every upload, query and key-generation message of a deviating client carries a
zero-knowledge proof of knowledge of its plaintext or key share, of bounded noise,
and of bounded features. HE-OFT does not include such proofs, and the theorem
holds for deviating clients whose messages satisfy the three readings.

\emph{The view.} The view of $\mathcal{A}$ consists of the key-generation
transcript, the honest clients' shares in the refreshes, decryptions and key
switches in which its clients take part, the ciphertexts the server presents in
those runs, and the answers to its own queries. The honest server presents a
ciphertext only when the protocol prescribes it, and that ciphertext is the
prescribed function of all uploads and of the query. Each upload and query of
a deviating client is a function of the adversary's view at the time it is sent.
By the first reading, it is an encryption of a plaintext that the reduction
extracts.

\emph{Deviating shares.} A deviating client may contribute any value as its share
in a subprotocol run. An honest share depends only on the presented ciphertext and
on the honest client's key share, and it does not depend on any other share. A
deviating share therefore changes what the combined result decrypts to and leaves
the honest shares unchanged. The adversary computes the combined result from the
honest shares and its own, and deviating gives it no information beyond the
honest shares.

\emph{Hybrids.} Let $\mathsf{G}_0(b)$ be the game of \cref{def:contentgame} with
challenge bit $b$. We follow the three steps of the proof of
\cref{thm:semihonest}.

In $\mathsf{G}_1(b)$, the honest shares in every subprotocol run are produced by
the simulator of that subprotocol from the honest clients' part of the output.
The reduction computes that part from the plaintext of the presented ciphertext
and from the deviating clients' key shares, which it extracts. By the second
reading, the smudging condition holds in every run, and the $K$ replacements cost
at most $K\varepsilon_{\mathrm{sub}}$.

In $\mathsf{G}_2(b)$, the output of every key switch to an honest client is
replaced by an encryption of zero under that client's key, at a cost of at most
$\Nc Q\,\varepsilon'_{\mathrm{cpa}}$.

In $\mathsf{G}_3(b)$, every ciphertext under the collective key that an honest
client encrypts or that a refresh outputs is replaced by an encryption of zero, at
a cost of at most $M\varepsilon_{\mathrm{cpa}}$. The reduction needs the
plaintexts of the presented ciphertexts. It holds every honest dataset, it
obtains the plaintexts of the deviating clients' uploads and queries by the first
reading, and it computes the presented plaintexts from these. A deviating refresh
share shifts the refreshed value by an amount the reduction computes from the
extracted key shares, and the replacement keeps that shift.

In $\mathsf{G}_3(b)$, the view depends on $b$ only through $a^{\star}$ and through
the answers to the deviating clients' queries. \Cref{def:contentgame} fixes
$a^{\star}$ to the same value for $b=0$ and $b=1$. The honest server enforces the
allowance, and the deviating clients therefore receive at most
$Q_{\mathrm{tot}}=(\tc-1)Q$ answers. By the third reading each answer is a label
of the served head on a query the adversary chooses, possibly adaptively. The
distance between $\mathsf{G}_3(0)$ and $\mathsf{G}_3(1)$ therefore gives an
advantage of at most $\delta(Q_{\mathrm{tot}})$, the advantage of an adversary
that chooses the deviating clients' uploads and their shares in the refreshes,
and then asks $Q_{\mathrm{tot}}$ label queries to the resulting head. We measure
$\delta$ as $|\Pr[b'=b]-\tfrac12|$, as in \cref{def:contentgame}. Collecting the three
steps, the advantage of $\mathcal{A}$ is at most
\[
  K\varepsilon_{\mathrm{sub}}+\Nc Q\,\varepsilon'_{\mathrm{cpa}}
  +M\varepsilon_{\mathrm{cpa}}+\delta(Q_{\mathrm{tot}})
  \;=\;\negl(\lambda)+\delta(Q_{\mathrm{tot}}).
\]
Here $\negl(\lambda)$ also absorbs the binding and extraction errors of the
commitments and of the plaintext extraction.

\end{proof}

The bound grows with the corruption threshold. A coalition of $\tc-1$ clients
holds $(\tc-1)Q$ queries regardless of the size of the federation.
We set $Q$ in \cref{sec:exp-leak} from the cost of extraction, between three and
five queries per parameter of the head for a copy of fidelity $0.90$. Against the head
in plaintext, the likelihood-ratio membership attack stays close to a random
guess (\cref{tab:surfaces}).

\subsection{The Honest-Server Assumption}
\label{sec:malicious-ext}

\Cref{thm:malicious} assumes an honest server. We show that this assumption is
necessary for every protocol in which the clients decrypt ciphertexts that the
server presents. A malicious server can present a ciphertext of its own choosing,
such as a row of $\thstar$, in place of the output of \cref{alg:serve}. An honest client would have to refuse to decrypt it. By \cref{prop:nogate}, the client's decision cannot depend on the plaintext.

\begin{proposition}[No plaintext gate from a ciphertext]
\label{prop:nogate}
Let an honest client decide, by a polynomial-time predicate $D$ on its view,
whether to contribute its decryption share for a presented ciphertext. Let the
rest of that view be independent of the underlying plaintext. If the scheme is
IND-CPA against an adversary holding $\tc-1$ key shares, then for every pair of
plaintexts $m_0$ and $m_1$,
\[
  \big|\Pr[D(\mathsf{Enc}(m_0))=1]-\Pr[D(\mathsf{Enc}(m_1))=1]\big|
  \le \negl(\lambda).
\]
\end{proposition}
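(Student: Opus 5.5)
The plan is a direct reduction to the IND-CPA game used in the proof of \cref{thm:semihonest}. Suppose that for some pair $m_0,m_1$ the gap $\big|\Pr[D(\mathsf{Enc}(m_0))=1]-\Pr[D(\mathsf{Enc}(m_1))=1]\big|$ is a non-negligible $\eta$. We build an adversary $\mathcal{B}$ that plays the $\tc-1$ corrupted clients in the distributed key generation. The challenger plays the honest clients, including the deciding client. $\mathcal{B}$ submits the pair $(m_0,m_1)$, receives the challenge ciphertext $c^\ast$, and assembles the deciding client's view around it. It then outputs $b'=D(\text{view})$. This advantage is exactly $\eta/2$, which contradicts IND-CPA against $\tc-1$ key shares.

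The key steps, in order, are as follows.

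First, fix what the client's view contains. It contains the presented ciphertext, the public material (the collective public key, the evaluation keys and the public parameters), and the \emph{rest} of the view.

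Second, show that $\mathcal{B}$ can sample that rest with the correct joint distribution. The hypothesis that the rest of the view is independent of the underlying plaintext is what allows this: $\mathcal{B}$ samples the rest from its fixed distribution, without knowing which $m_b$ was encrypted.

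Third, note that $D$ runs in polynomial time, so $\mathcal{B}$ runs in polynomial time.

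Fourth, apply the standard identity
\[
\Pr[b'=b]-\tfrac12=\tfrac12\big(\Pr[D=1\mid b=1]-\Pr[D=1\mid b=0]\big),
\]
which gives the bound up to a factor of two, absorbed into $\negl(\lambda)$.

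The step I expect to be the main obstacle is the second: justifying that $\mathcal{B}$ can produce the rest of the view without the honest client's secret key share. The view genuinely contains that share, and it also contains transcripts of earlier subprotocols. The challenger's key share is hidden from $\mathcal{B}$, so $D$ cannot simply be run on the real view.

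I would handle this as follows. Read the independence hypothesis as saying that the rest of the view is simulatable from public information. Then invoke the subprotocol simulators and the hybrid $\mathsf{H}_1$ from the proof of \cref{thm:semihonest}, which remove the honest key shares from everything except key generation. This costs an additive $K\varepsilon_{\mathrm{sub}}$.

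If the predicate is allowed to read the client's own secret share directly, no such reduction can work. A trial decryption would gate on the plaintext, and the statement would be false. The proof must therefore state explicitly that $D$ acts on the client's view excluding its key share, or equivalently that a single share reveals nothing about the plaintext, as the $\tc$-threshold structure guarantees. I would make this explicit as the interpretation of ``the rest of that view''.

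Finally, I would remark on the consequence. Because $D$ cannot distinguish a legitimate serving output from a row of $\thstar$, any refusal rule either blocks honest requests or admits malicious ones with nearly equal probability. This is why \cref{thm:malicious} needs an honest server.
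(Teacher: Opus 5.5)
Your reduction puts the wrong party on the adversary's side of the IND-CPA game, and the obstacle you flag in your second step comes from that choice. The paper's proof lets the distinguisher \emph{be} the deciding client. It runs key generation holding that client's single key share, submits $(m_0,m_1)$, and receives $\mathsf{Enc}(m_b)$. It then produces the rest of the view, which by hypothesis does not depend on $m_b$, and outputs $D$ evaluated on the full view, the client's own share included. Because $\tc\ge 2$, one share is within the $\tc-1$ shares the hypothesis tolerates, so a non-negligible gap contradicts IND-CPA directly. If you also want other corrupted parties in the reduction, hold the deciding client's share together with at most $\tc-2$ others. What matters is that the deciding client's share is among those the reduction holds.

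Neither of your patches closes the gap this choice opens.
\begin{itemize}
\item The hybrid $\mathsf{H}_1$ replaces honest shares inside the subprotocol transcripts that corrupted parties see. It does not remove the deciding client's own secret share from that client's view, which is exactly what your $\mathcal{B}$ lacks. It also imports the smudging term $K\varepsilon_{\mathrm{sub}}$, which is not a hypothesis of the proposition.
\item Your fallback, restricting $D$ to a view without the key share, proves a weaker statement than the one given. The reason you give for it is also wrong. With $\tc\ge 2$ a single share admits no trial decryption, so letting $D$ read its own share does not make the statement false.
\end{itemize}
``A single share reveals nothing about the plaintext'' is not equivalent to excluding the share from $D$'s input. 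It is the IND-CPA hypothesis specialized to one share, and it is precisely what lets the reduction run $D$ on the unrestricted view. Moreover, the key share is fixed before the ciphertext is formed. It therefore already belongs to ``the rest of that view'', which the proposition assumes independent of the plaintext.
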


\begin{proof}
A gap between the two probabilities gives an IND-CPA distinguisher that holds
one key share. The scheme is assumed secure against any adversary holding up to
$\tc-1$ key shares, which includes one share, and such a distinguisher therefore
contradicts that assumption.
\end{proof}

Correctness requires an honest client to contribute its share for the
prescribed ciphertext. By \cref{prop:nogate}, it then contributes its share for
the ciphertext of a malicious server with the same probability, up to a
negligible difference. Hence no protocol in which the clients decrypt ciphertexts
that the server presents realizes $\Fhe$ against a malicious server.

Two mechanisms would remove the assumption.
The first verifies where a ciphertext presented for decryption came from. The
serving circuit is a deterministic function of the encrypted head, the encrypted
query, the evaluation keys, and the public parameters. Its level restoration is
deterministic as well, because HE-OFT restores levels with collectively generated
bootstrapping keys and not by collective refresh. A checker that holds these four
inputs can therefore recompute the prescribed output ciphertext and compare it
bit for bit with the presented one. Restoring levels by bootstrapping thus saves
$510$\,MiB of traffic per query and also makes the circuit verifiable. The
checker must recompute every query. A malicious server can present a whole row of
the head in one request, and at four classes the whole head fits in one
ciphertext. A single unchecked request can thus leak the head.

The first mechanism has two costs that HE-OFT as specified does not
incur. The checker must hold the encrypted head and the querier's encrypted
features, and no quorum member holds either. Broadcasting both to the quorum
discloses no plaintext, because both are ciphertexts. The broadcast adds traffic to every
checked query. It also reveals to the quorum which client submitted the query,
and we count this as leakage in \cref{sec:exp-leak}.

The second mechanism certifies which function the server
evaluated, through a zero-knowledge proof of correct evaluation or verifiable
homomorphic encryption~\cite{viand2023verifiable}. It incurs neither of these two costs.
\Cref{sec:scope} lists both mechanisms as limitations.

\subsection{Leakage to a Client Coalition}
\label{sec:c2c}

\Cref{thm:malicious} bounds the advantage of a deviating coalition by
$\negl(\lambda)+\delta(Q_{\mathrm{tot}})$ and leaves $\delta$ unspecified. We now
bound $\delta$. A client coalition has more access than any other
adversary in this model, because its members hold key shares, run the backbone
themselves, and receive answers.

A client outside the coalition contributes four things to what the
coalition sees: (i) the public sample count $\nj$, which HE-OFT uses as a
public scalar, (ii) the ciphertexts it uploads, which are computationally
independent of what they encrypt, (iii) its share of the announced index
$a^{\star}$ and of the phase signals, and (iv) its part of every answer the
coalition is given. It contributes no per-class total, because the server
inverts those under encryption and never decrypts them (\cref{sec:split}), and no adapter, because
no adapter aggregate exists.

The allowance bounds how many answers a coalition receives, and a
membership attack needs few of them. Membership inference from predicted labels
is an established attack on a classifier that answers
queries~\cite{choquettechoo2021labelonly}. An allowance set against
extraction leaves room for this attack. We therefore bound what the answers reveal about a
record by what the head itself reveals, independently of the allowance.

\begin{proposition}[The query channel is no stronger than the head]
\label{prop:c2c}
Assume the conditions of \cref{thm:malicious} and the readings of its proof. Let
$\mathcal{A}$ corrupt at most $\tc-1$ clients in \cref{def:contentgame}, with
the server honest, and let it ask at most $Q_{\mathrm{tot}}=(\tc-1)Q$ queries.
Then there is an adversary $\mathcal{B}$ that receives both shared heads in
plaintext together with $\Leak$, asks no query, and satisfies
\[
  \mathsf{Adv}(\mathcal{A})\;\le\;\mathsf{Adv}(\mathcal{B})+\negl(\lambda).
\]
\end{proposition}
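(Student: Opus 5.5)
We build $\mathcal{B}$ as a wrapper around $\mathcal{A}$ that simulates the whole protocol for it, and we reuse the hybrids of the proof of \cref{thm:malicious}. In $\mathsf{G}_3(b)$ of that proof, every ciphertext that an honest client encrypts or that a refresh outputs is an encryption of zero. Every honest share is produced by a subprotocol simulator, and every key switch to an honest client outputs an encryption of zero. The view therefore depends on the honest data only through three quantities: $a^{\star}$, the leakage $\Leak$, and the labels returned to the deviating clients' queries. $\mathsf{G}_3(b)$ is within $K\varepsilon_{\mathrm{sub}}+\Nc Q\,\varepsilon'_{\mathrm{cpa}}+M\varepsilon_{\mathrm{cpa}}=\negl(\lambda)$ of the real game $\mathsf{G}_0(b)$ for each $b$. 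It therefore suffices to show that $\mathcal{B}$ can produce $\mathsf{G}_3(b)$ exactly from the two plaintext heads and $\Leak$.

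\emph{Construction of $\mathcal{B}$.} $\mathcal{B}$ receives $\thstar_A$, $\thstar_B$ and $\Leak$. It runs $\mathcal{A}$ and plays the honest clients and the honest server. It runs key generation with fresh honest shares and extracts the deviating clients' key shares through the extractable commitments of the first reading. It sends encryptions of zero for the honest uploads, and it extracts the plaintexts of the deviating uploads, again by the first reading.

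The one point needing care is that the served head depends on the deviating uploads. $\mathcal{B}$ is given the heads of the honest run, not of the run with $\mathcal{A}$'s uploads. We handle this by linearity of \cref{eq:aggregate}. The numerator is a sum of per-client terms, and the denominator is a sum of count vectors. Both are decomposable only if $\mathcal{B}$ knows the honest clients' share of the numerator and denominator, which is more than $\thstar$ itself.

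We therefore plan to interpret ``both shared heads'' as the heads that $\Fhe$ would store given the deviating clients' effective (extracted) inputs. This is the natural ideal-world reading, since an ideal adversary substitutes its inputs into $\Fhe$. $\mathcal{B}$ submits the extracted deviating contributions to the functionality and obtains the resulting heads. The second reading ensures that deviating refresh shares only shift the refreshed value by an amount computable from the extracted key shares. $\mathcal{B}$ applies that shift to the plaintext reciprocal.

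With the resulting plaintext heads, $\mathcal{B}$ answers each of $\mathcal{A}$'s at most $Q_{\mathrm{tot}}$ queries itself. It extracts the feature plaintext and computes the label as the argmax of $\thstar_{a^{\star}}$ applied to the extracted features, reading $a^{\star}$ from $\Leak$. By the third reading, this label equals what the protocol returns. $\mathcal{B}$ then runs the key-switch simulator on an encryption of that label under the querier's key, and it enforces the allowance as the honest server would. Finally, $\mathcal{B}$ outputs $\mathcal{A}$'s guess.

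$\mathcal{B}$'s output is distributed exactly as $\mathcal{A}$'s in $\mathsf{G}_3(b)$, for each $b$. The triangle inequality over $b\in\{0,1\}$ then gives $\mathsf{Adv}(\mathcal{A})\le\mathsf{Adv}(\mathcal{B})+\negl(\lambda)$, and $\mathcal{B}$ asks no query.

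\emph{Main obstacle.} The hardest step is the dependence of the served head on adversarial uploads and refresh shares. Our first attempt is the functionality-substitution reading above. If the statement must be read with the honest-run heads fixed, we would instead strengthen $\mathcal{B}$'s input to the honest aggregate numerator and denominator, from which any adversarially shifted head is computable. We would record this as the precise sense of ``both shared heads.''
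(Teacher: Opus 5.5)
Your proposal follows the paper's own proof: $\mathcal{B}$ reproduces the $b$-independent part of $\mathsf{G}_3$ from \cref{thm:malicious}, extracts and submits the deviating clients' contributions to obtain both heads computed from them (the same functionality-substitution reading the paper adopts), reads $a^{\star}$ from $\Leak$, answers each query by evaluating $\thstar_{a^{\star}}$ on the extracted features, outputs $\mathcal{A}$'s bit, and inherits the negligible hybrid distance. One detail needs tidying: $\mathcal{B}$ never holds the plaintext reciprocal, so it cannot apply the refresh shift itself; count the deviating refresh shares among the contributions it submits, as the definition of $\delta$ in that proof already does, so that the heads it receives reflect them.
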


\begin{proof}
In $\mathsf{G}_3$ of the proof of \cref{thm:malicious}, the view of $\mathcal{A}$
consists of a part that is independent of $b$ and of the answers to its own
queries. $\mathcal{B}$ produces the first part as $\mathsf{G}_3$ does. It extracts
the deviating clients' key shares and contributions by the first reading of that
proof, submits the contributions, and receives both heads computed from them. It
reads $a^{\star}$ from $\Leak$.

For each query of $\mathcal{A}$, $\mathcal{B}$ extracts the plaintext features by
the same reading. A deviating client need not hold an adapter, and the answer
depends on the query only through these features. $\mathcal{B}$ holds the head
$\thstar_{a^{\star}}$ in plaintext and therefore computes the value the serving
circuit returns on the features, which is the answer $\mathcal{A}$ receives.

$\mathcal{B}$ runs $\mathcal{A}$, answers each of its queries in this way, and
outputs the bit that $\mathcal{A}$ outputs. The view $\mathcal{B}$ produces
differs from the real one by the negligible term of the proof of
\cref{thm:malicious}.
\end{proof}

Write $\delta_{\mathrm{wb}}$ for the advantage of the best
polynomial-time adversary that holds both shared heads. By \cref{prop:c2c}, $\delta(Q_{\mathrm{tot}})\le\delta_{\mathrm{wb}}$ at every budget. The
advantage of a coalition against an honest client is therefore at most
$\negl(\lambda)+\delta_{\mathrm{wb}}$, which does not depend on the allowance. The
allowance determines how many queries a coalition spends to approach
$\delta_{\mathrm{wb}}$, and \cref{sec:exp-leak} measures that number.

Raising either the threshold $\tc$ or the allowance $Q$ raises
$Q_{\mathrm{tot}}=(\tc-1)Q$ and lets a coalition come closer to
$\delta_{\mathrm{wb}}$. Neither parameter changes $\delta_{\mathrm{wb}}$ itself.

We estimate $\delta_{\mathrm{wb}}$ with the adversary of
\cref{prop:c2c}. The attacker receives both shared heads in plaintext and runs
the likelihood-ratio membership attack of Carlini
et al.~\cite{carlini2022membership} against them. We build $64$ shadow
federations per cell, each consisting of $\Nc$ clients that train on half of
their own examples and merge by \cref{eq:aggregate}. We test $2000$
candidates over three tasks, three seeds and both arrangements. Averaged over
the three seeds, the attack reaches an area under the curve between $0.49$ and
$0.53$. Its averaged true-positive rate at a false-positive rate of $0.01$ lies
between $0.008$ and $0.023$. A
random guess has an area of $0.5$ and a true-positive rate equal to the
false-positive rate, and the measured values lie close to both.

No party in HE-OFT holds either head. This experiment thus evaluates a
setting that arises only if the encryption is broken. By \cref{prop:c2c}, the
query channel is at most as informative as this setting.

Extracting a copy of the head before the attack does not raise the
advantage. We fit a copy of the head to $2\times10^{5}$ label-only answers, the
largest budget of \cref{sec:exp-leak}, and run the same attack against the copy.
Averaged over three seeds, the area under the curve lies between $0.48$ and
$0.50$, and the true-positive
rate at a false-positive rate of $0.01$ between $0.009$ and $0.017$. The result agrees with \cref{prop:c2c}.

The one-shot label-only (OSLO) membership attack does not raise the
advantage either. OSLO crafts one query per candidate instead of reading a statistic off a model. On CIFAR-100, Peng et al. report a true-positive rate of $6.7$ per cent at a false-positive rate of $0.001$, where the previous best label-only attack reaches $0.3$~\cite{peng2024oslo}. The serving
interface of HE-OFT exposes more to this attack than a standard prediction interface. A client uploads its own encrypted feature vector. Under \cref{thm:malicious}, a coalition may choose its queries freely. A crafted feature
vector therefore reaches the head unchanged. Over six cells, averaged over three
seeds, the attack reaches an area under the curve between $0.47$ and $0.52$.

The calibration step of the attack explains this result. A coalition
chooses its perturbation on its own data and knows which examples
each surrogate trained on. Even there, the best separation it reaches is at most
$0.065$, and below $0.01$ in most cells. The crafted query transfers to the served
head, but the head shows no difference in margin between members and non-members
for the query to exploit.

One further measurement is confounded. Predicting membership from whether the head classifies an example correctly reaches an area
under the curve of $0.72$ on Banking77. Our holdout contains at least one example
of every class a client holds. Held-out examples are therefore richer in rare classes than
training examples, and the rule separates the two sets for a reason unrelated to
membership.
The likelihood-ratio attack calibrates each example against its own shadow
distributions and does not inherit the confound.

The coverage-weighted merge admits a more targeted attack. A class that one client alone holds takes its row from that client's displacement, with no averaging over the federation. For a linear map with a bias, the ratio of the row displacement to the bias displacement
is a weighted mean of that client's features~\cite{phong2018privacy}. We compute
that ratio for all $570$ class rows of the three tasks. Where one client holds the class, the ratio matches the closest example of its own class better than the closest example of another class by $0.012$ in cosine. Where six or more clients hold it, the margin is $0.006$. The absolute cosine is $0.16$, and a random direction reaches $0.029$. The row points into the subspace the features occupy. It identifies
neither a class nor a record.

Three other attack families require structure that HE-OFT does not
provide. The first is reconstruction from weights. The strongest published attack in our setting recovers
training images from heads trained on frozen $768$-dimensional embeddings. Its
authors report that it fails on linear probes and name linear models as a
mitigation~\cite{oz2024reconstructing}. Their own governing ratio is the
parameter count over the unknowns, which for a linear head reduces to the class
count over the client's sample count. Ours is about $0.01$, and the smallest
ratio at which they report any recovery is $0.5$.

Attribute inference is the second. Against a model trained on features of
this kind, the black-box attack performs at the base rate. The attack reaches $0.28$ against a base
rate of $0.28$ on one dataset and $0.05$ against $0.06$ on
another~\cite{jayaraman2022attribute}. The variant that outperforms imputation
reads individual hidden-layer activations, and a single linear map has none.

Gradient inversion is the third, and it does not apply here. Every attack
in that line needs gradients with respect to parameters that see the input, or a
model the server chose~\cite{zhu2019deep,geiping2020inverting}. The backbone here
is public and fixed, every client runs the same one, and the only uploaded object
is a trained displacement of a map placed after the last nonlinearity.

The membership measurements have two limits. A thousand members resolve a
true-positive rate no finer than $0.001$. At a false-positive
rate of $0.001$, each measurement rests on two to four examples. The shadow
federations vary who trains on what and hold the coverage pattern fixed. They
thus say nothing about how leakage changes with the partition.

The membership attack of Carlini et
al.~\cite{carlini2022membership} remains close to a random guess after successful
extraction and when we disclose a head in plaintext. The head is a
linear map fitted for two hundred steps over thousands of examples per class.

\subsection{Approximate Decryption and CPA-D Security}
\label{sec:cpad}

HE-OFT performs two kinds of decryption. The selection step decrypts
one scalar, and the quorum key-switches each answer to the querier. Both are intended outputs. In an exact scheme, a decryption returns only the
message. An intended output hence reveals nothing beyond itself. CKKS is approximate. A decryption
returns the message together with an error term, and that term is a function of
the secret key and of the values the circuit consumed. Li and Micciancio show
that an adversary which observes decryption results recovers the key that
produced them~\cite{limicciancio2021approximate}.

Their attack does not need the message. It reads the ring element
$c_0+c_1 s$ out of the decoded output and inverts it. An adversary can therefore mount the attack without predicting the output of the circuit. In HE-OFT a single decryption does not give this ring element. Every
decryption follows a collective key switch, and each switching share adds
smudging noise. The querier therefore obtains $c_0+c_1s$ only up to that noise,
and no party ever decrypts $\mathsf{Enc}(\thstar)$. Repeated queries can average
the noise away, and we analyze that attack below.

The published threshold attack does not apply. It identifies ciphertexts whose noise is exactly zero and reads a linear
equation in the key off each one~\cite{checri2024threshold}. Detecting a
zero-noise ciphertext requires a rounding step with observable success,
which exact schemes have and CKKS does not. The statement of Checri et al. about threshold CKKS
covers the case where no smudging is applied.

Our implementation does not meet a condition on the smudging parameter. The key-switch simulator of the multiparty scheme
we build on is correct when the smudging noise dominates the noise of the
ciphertext being switched. This condition lets the simulator produce an honest client's share
without knowing that noise~\cite{mouchet2021multiparty}. That scheme sets the
requirement at $2^{\lambda/2}$ times the ciphertext noise. Our implementation
smudges at eight times the fresh-encryption noise, the value the library uses in its own tests. The serving circuit leaves a ciphertext noise several
orders of magnitude larger than that. The simulation step of
\cref{thm:semihonest} therefore rests on a hypothesis our measured configuration
does not satisfy. When the hypothesis holds, the step hides the circuit noise,
which depends on the honest clients' displacements. Whether an adversary can exploit that noise is
an open question, and the authors of the scheme record it as open as well.

Meeting the stated requirement takes more than raising one constant.
At a plaintext scale of $2^{45}$, the required smudging exceeds the scale itself
and would erase the answer. A configuration that
satisfies the condition needs a larger scale and modulus, and every cost in
\cref{sec:exp-cost} would move with them. Prior work reports the same
conflict. Threshold CKKS is insecure under this
notion when no smudging is applied, and flooding sized to the rule is expected
to cost the precision of the decrypted result~\cite{checri2024threshold}. We
report the configuration we measured together with this gap.

Repeated queries turn this gap into the key-recovery attack known for
approximate homomorphic encryption~\cite{limicciancio2021approximate}. The
number of repetitions it needs determines which control a deployment requires. Write the
served ciphertext as $(c_0,c_1)$ under the collective key $s$. The querier
decrypts $c_0+c_1s+E$, where $E$ collects the honest parties' smudging terms. The
serving circuit is deterministic. A client that resubmits an identical query
receives the same $c_0$ and $c_1$ and a freshly drawn $E$. Averaging $k$
repetitions shrinks the averaged $E$. Once it falls below half a unit in every coefficient, rounding recovers $c_0+c_1s$ exactly. The noise of
the ciphertext is not a separate term. The quantity $c_0+c_1s$ is the plaintext
together with that noise. An adversary that holds $c_0$ and $c_1$ subtracts and
rounds, obtains $c_1s$, and recovers $s$ by one ring inversion. The server holds
$c_0$ and $c_1$. Under \cref{def:semihonest}, the adversary corrupts the server together with
$\tc-1$ clients. This view falls inside the model. HE-OFT also permits a
client to resubmit a query, and the adversary hence remains semi-honest.

The cost is the number of repetitions that drives the averaged error
below half a unit in each of the $2^{15}$ coefficients. For a resampled part of standard deviation $\sigma$, that number is on the order of $81\sigma^2$. At eight times
the fresh-encryption noise and a quorum of ten that is about $5\times10^{5}$
repetitions, which is the same order as the query budgets of
\cref{sec:exp-leak}. A duplicate-query check is therefore a required control. Re-randomizing a query defeats the check but not the control. A
re-randomized query is a different ciphertext. It thus yields one independent
sample and not a repetition, and the flooding is sized for independent samples. Sizing that flooding for $q$ decryptions costs
half a bit for each doubling of $q$, and the budget is refreshed by rekeying once
it is spent~\cite{limicciancio2022dp}.

Raising the smudging is the second control. The values HE-OFT decrypts are one index and
one label. The precision the answer needs is a few bits against a plaintext
scale of $2^{45}$. The number of repetitions grows with the square of the
smudging noise. Moving the smudging from eight times the fresh-encryption noise
to $2^{25}$ raises it far beyond any query budget and still leaves twenty bits of
the scale. It changes neither latency nor traffic, and
it does not change the accuracy of the served answers, because HE-OFT returns
only the index of the largest score. It does not meet the $2^{\lambda/2}$ rule.
The served ciphertext enters the key switch with noise of standard deviation
$2^{12.9}$ at four classes and $2^{16.8}$ at a hundred. The smudging of
$2^{25}$ exceeds that noise by $8$ to $12$ bits. Over $150$ key switches, every one
returned the plaintext label, and the decoded index was never more than $0.13$
from the integer. We recommend this configuration.

One deviation in key generation lies outside the theorems. Collective key
generation has each party publish a share of the public key, and the shares are
summed. A rushing party publishes its share after seeing the others.
It can then subtract their contribution and obtain a public key whose secret it alone
holds. Every ciphertext
uploaded under that key is then readable by that party. The honest parties detect the deviation only when their decryption shares fail to reconstruct, and by then the uploads have been read. The standard remedy is to commit the
shares before opening them, which costs one round and is independent of the ring
parameters. Under \cref{thm:malicious}, a coalition may deviate in the shares it
contributes. The proof covers that deviation for decryption shares, where a
wrong share changes only the value the querier decrypts. That argument does not cover key-generation
shares. The theorem hence applies only to a deployment that includes the
commitment round.
\section{Experiments}
\label{sec:experiments}

We measure what HE-OFT learns under encryption (\cref{sec:exp-split}), how it
combines heterogeneous clients (\cref{sec:exp-select}) and how it compares with
one-shot federated learning (\cref{sec:exp-peers}). We then measure what the
encryption costs against encrypted systems (\cref{sec:exp-cost}) and what a
querier can learn at test time (\cref{sec:exp-leak}).

\begin{figure*}[t]
\centering
\includegraphics{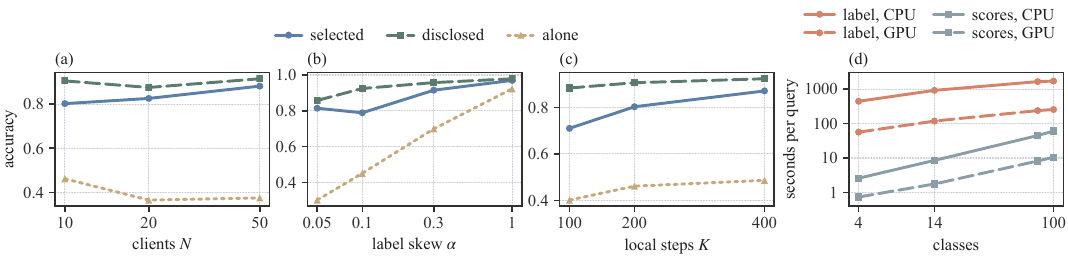}
\caption{HE-OFT along four axes. Panels (a) to (c) vary the federation
size, the label skew and the local trajectory on DBpedia around $\Nc=10$,
$\alpha=0.1$ and $K=200$. Panel (d) gives the test time of one query
at $\Nc=10$ against the number of classes, returning the label or the scores.
Times are single-core measurements or GPU projections (\cref{sec:exp-cost}). Panels (a) and (c) are one seed
and panel (b) is the mean over three.}
\label{fig:trends}
\end{figure*}

\subsection{Setup}
\label{sec:setup}

Each client trains a rank-$8$ low-rank adapter and a classifier head. The adapter's down-projection stays frozen at a shared public initialization~\cite{sun2024ffalora}.

We evaluate four text classification tasks of increasing label-space size
with a frozen RoBERTa-base encoder~\cite{liu2019roberta}: AG-News with $4$
classes~\cite{zhang2015charcnn}, TREC with $6$~\cite{li2002trec}, DBpedia with
$14$~\cite{zhang2015charcnn} and Banking77 with $77$~\cite{casanueva2020banking77}.
We evaluate CIFAR-100~\cite{krizhevsky2009cifar} with a frozen
ViT-B/16~\cite{dosovitskiy2021vit}. Each text task draws at most
$20{,}000$ training and $5{,}000$ test examples under the seed, and each vision
task at most $10{,}000$ and $2{,}000$. TREC and Banking77 are smaller than these
limits, and we use them whole. We partition the data across clients with a
Dirichlet distribution over labels with concentration $\alpha$, the standard
label-skew partition~\cite{hsu2019measuring,li2022niidbench}. It induces label-distribution
skew in the usual taxonomy~\cite{kairouz2021advances} and holds the feature
distribution fixed across clients. A smaller $\alpha$ leaves each client with a few
dominant classes and almost none of the rest. Unless stated otherwise, there are $\Nc=10$
clients, $\alpha=0.1$, the local trajectory is $200$ steps, and every number is the
mean over three seeds. Each client reserves at least one held-out example from
every class it holds (\cref{sec:selection}) and tops the held-out set
up to one tenth of its shard at random.

We write $\Asel$ for the accuracy of the selected servable arrangement
(\cref{sec:exp-select}), and $\Aloc$ for the accuracy a client obtains
alone. We write $\Adis$ for the accuracy of the disclosed model, in which
the adapter and the head are both aggregated and decrypted. We write $\Apool$ for the accuracy of one model
trained on the union of the clients' data.

The ratio $\Asel/\Adis$ is the share of the disclosed model's accuracy that
HE-OFT keeps without disclosing any model. The difference $\Apool-\Adis$ is what
the partition of the data costs even a disclosed model.

\subsection{One-Shot Encrypted Federated Learning}
\label{sec:exp-split}

Each client adapts its representation locally, and the federation shares only
the classifier head. The resulting model covers the whole label space, and
encryption does not degrade its accuracy. No prior framework keeps the contributions under
multiparty encryption while fine-tuning a pretrained transformer. POSEIDON~\cite{sav2021poseidon}  trains networks of a few layers from scratch, and
slytHErin~\cite{intoci2023slytherin} and CryptPEFT~\cite{xia2026cryptpeft} serve a model they do not train (\cref{tab:related}). We
therefore measure HE-OFT against references on its own tasks. We report both
servable arrangements, a client alone, the disclosed model, and pooled training
in \cref{tab:headline}.
The arrangement labeled \emph{shared head} gives every client the same head over
the bare public backbone. \emph{Personal adapter} gives
each client the shared head over its own locally adapted representation.

\begin{table}[t]
\centering
\caption{Accuracy on the global test set, three seeds, $\Nc=10$, $\alpha=0.1$. The servable columns are the two arrangements from which HE-OFT answers queries,
and bold marks the better one. The reference columns cannot be served under the
threat model of \cref{sec:threat}. Alone is one client's own model without
federation, disclosed is the merged model that HE-OFT does not build, and pooled
is centralized training on the union of the data. On one seed the
Dirichlet draw leaves some clients with fewer than twenty samples, too few to
train. The effective federation is thus seven on AG-News and nine on TREC.}
\label{tab:headline}
\footnotesize
\setlength{\tabcolsep}{4pt}
\begin{tabular}{lcccccc}
\toprule
 & & \multicolumn{2}{c}{servable} & & \multicolumn{2}{c}{reference} \\
\cmidrule(lr){3-4}\cmidrule(lr){6-7}
Task & $\Cc$ & shared & personal & alone & disclosed & pooled \\
 & & head & adapter & & & \\
\midrule
AG-News    & 4   & $\mathbf{0.649}$ & 0.479 & 0.475 & 0.739 & 0.921 \\
TREC       & 6   & $\mathbf{0.607}$ & 0.429 & 0.400 & 0.711 & 0.967 \\
DBpedia    & 14  & $\mathbf{0.789}$ & 0.754 & 0.451 & 0.925 & 0.988 \\
Banking77  & 77  & 0.206 & $\mathbf{0.686}$ & 0.249 & 0.760 & 0.923 \\
CIFAR-100  & 100 & 0.748 & $\mathbf{0.774}$ & 0.197 & 0.784 & 0.893 \\
\bottomrule
\end{tabular}
\end{table}

\Cref{tab:headline} supports three observations. First, the federation raises
accuracy on every task, and the gain is largest where a client's own data covers
the least of the label space.

Second, the better arrangement changes with the size of the label space. On the
small label spaces the shared head is better, and on the large ones the personal
adapter is. The AG-News margin rests on two of the three seeds, because the third
partition drops three clients. On Banking77 the shared head falls far below the
personal adapter, and missing coverage does not explain this drop. Every class row has three
or more holders on one of the three Banking77 seeds and at least one on all
three. The merge thus has evidence for every row. The bare public backbone
does not separate seventy-seven intents with a linear map, and each client's own
adapter does.

Third, the two arrangements fail under different conditions. The shared head
fails when the
public backbone alone does not separate the label space. The personal adapter
fails when each client's representation drifts far enough that a common head no
longer transfers.
The federation therefore selects between the two arrangements
(\cref{sec:exp-select}).

\begin{table}[t]
\centering
\caption{Standard deviation over the three seeds of \cref{tab:headline}, in
parentheses beside each mean. The spread is below $0.03$ everywhere except
AG-News and TREC. On AG-News the shared head reaches $0.809$, $0.736$ and
$0.402$, and the third partition drops three clients.}
\label{tab:headline-sd}
\footnotesize
\setlength{\tabcolsep}{2pt}
\begin{tabular}{lcccc}
\toprule
Task & shared head & personal adapter & alone & disclosed \\
\midrule
AG-News   & $0.649$ ($0.217$) & $0.479$ ($0.021$) & $0.475$ ($0.006$) & $0.739$ ($0.073$) \\
TREC      & $0.607$ ($0.074$) & $0.429$ ($0.068$) & $0.400$ ($0.051$) & $0.711$ ($0.043$) \\
DBpedia   & $0.789$ ($0.019$) & $0.754$ ($0.014$) & $0.450$ ($0.009$) & $0.925$ ($0.020$) \\
Banking77 & $0.206$ ($0.030$) & $0.686$ ($0.012$) & $0.249$ ($0.005$) & $0.760$ ($0.018$) \\
CIFAR-100 & $0.748$ ($0.010$) & $0.774$ ($0.003$) & $0.197$ ($0.009$) & $0.784$ ($0.004$) \\
\bottomrule
\end{tabular}
\end{table}

\Cref{tab:headline-sd} gives the standard deviation of every entry of
\cref{tab:headline}. Three of the five tasks vary by less than $0.03$ in every
arrangement. AG-News varies most because of its partition. One of its three Dirichlet draws leaves seven usable clients, and
the shared head reaches $0.402$ on that draw against $0.809$ and $0.736$ on the
other two. The estimator selects the personal adapter on that seed, and therefore
$\Asel$ exceeds the shared head's mean on AG-News.

\paragraph{The public initializer} $\thz$ is the state every client starts
from. Its head is the default random initialization of a linear layer under the shared public seed, and is therefore nonzero. A row that no client
updates keeps a random direction and still competes in the argmax. The count
above shows that no such row occurs in our runs, and the choice thus affects no
reported number. A deployment with uncovered classes should set those rows to
zero.

\paragraph{Clients per class row} Under the coverage-weighted merge, a row
decided by one client carries that client's displacement alone. We
counted the holders of every class row on the partitions the accuracy numbers were
measured on. On AG-News, TREC and DBpedia every class is held by three or more
clients on every seed. On Banking77 the same holds on one seed, and the other two
leave two classes with two holders and one class with a single holder. Every class is held by at least one client. Coverage therefore does not limit the shared head
on the large label spaces. \Cref{sec:c2c} uses the same counts to bound what a row
with a single holder could reveal.

\paragraph{Sensitivity} \Cref{fig:trends} varies the protocol axes around the
default on DBpedia. Accuracy falls with skew, from $0.970$ at
$\alpha=1.0$ to $0.789$ at $\alpha=0.1$. Longer local
training raises accuracy. Accuracy
rises with the size of the federation, from $0.803$ at $\Nc=10$ to $0.882$ at
$\Nc=50$. The federation-size panel is a single seed, and its $\Nc=10$ value therefore differs from
\cref{tab:headline}. \Cref{tab:sens} gives the underlying numbers.

The estimator chooses the shared head at $\alpha\in\{0.05,0.1\}$ and the personal adapter at
$\alpha\in\{0.3,1.0\}$, on every seed in each case. The crossover thus depends on
coverage as well as on the size of the label space. The personal adapter is preferable once each client sees enough of
the label space for its own representation to transfer, and the shared head is
preferable otherwise. The estimator finds this crossover from encrypted local
evidence, without a test set or a fitted threshold.

\begin{table}[t]
\centering
\caption{Sensitivity on DBpedia along the protocol axes, each varied around the
default of $\Nc=10$, $\alpha=0.1$ and $200$ local steps. The row reports the
accuracy of the arrangement the estimator selects. The skew row is the mean over
three seeds. The local-steps and client-count rows are a single seed, and their
default cells are that seed's value rather than the three-seed mean.}
\label{tab:sens}
\footnotesize
\begin{tabular}{lcccc}
\toprule
\multicolumn{5}{@{}l}{\emph{Label skew} $\alpha$} \\
 & $0.05$ & $0.10$ & $0.30$ & $1.00$ \\
selected arrangement & $0.814$ & $0.789$ & $0.915$ & $0.970$ \\
selected & shared & shared & personal & personal \\
\midrule
\multicolumn{5}{@{}l}{\emph{Clients} $\Nc$} \\
 & $10$ & $20$ & $50$ & \\
selected arrangement & $0.803$ & $0.826$ & $0.882$ & \\
\midrule
\multicolumn{5}{@{}l}{\emph{Local steps}} \\
 & $100$ & $200$ & $400$ & \\
selected arrangement & $0.710$ & $0.803$ & $0.871$ & \\
\bottomrule
\end{tabular}
\end{table}

 The encryption adds no loss of its own. The encrypted argmax agrees with the
plaintext maximum to $7.3\times10^{-5}$, far below the smallest scaled gap of
$0.0068$ between the top two logits we observed on real queries. The encrypted
argmax therefore selects the same class. On a recorded AG-News head, all $32$ encrypted answers
over the two arrangements equal the plaintext answers. The accuracy of
\cref{tab:headline} is therefore the accuracy HE-OFT serves under encryption.

HE-OFT keeps most of the disclosed model's accuracy without disclosing
any model. The retained share $\Asel/\Adis$ is $0.905$ on AG-News, $0.854$ on
TREC, $0.853$ on DBpedia, $0.902$ on Banking77 and $0.963$ on CIFAR-100. It is
lowest for label spaces in the middle of the range and highest at the two ends.
No plaintext model exists after HE-OFT ends.

The last two columns of
\cref{tab:headline} give the cost of the partition, $\Apool-\Adis$. On four of
the five tasks it exceeds the gap $\Adis-\Asel$ between the disclosed model and
HE-OFT. The cost of the partition is $0.182$ against
$0.071$ on AG-News, $0.257$ against $0.104$ on TREC, $0.163$ against $0.075$ on
Banking77, and $0.109$ against $0.029$ on CIFAR-100. DBpedia is the exception, at
$0.063$ against $0.136$. On those four
tasks, the partition of the data thus accounts for more of the distance between a
servable arrangement and a centralized model than never decrypting does.

\subsection{Handling Data Heterogeneity}
\label{sec:exp-select}

\begin{figure}[t]
\centering
\includegraphics{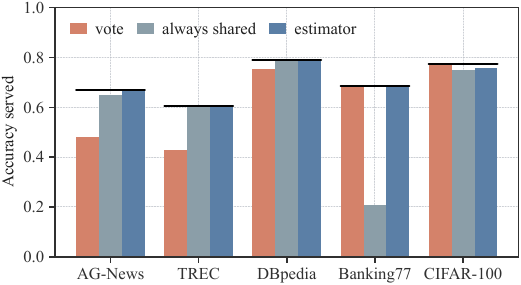}
\caption{Accuracy of the arrangement each selection rule serves, per task of
\cref{tab:headline}, averaged over three seeds. The black mark is the better of
the two servable arrangements, chosen per seed.}
\label{fig:regret}
\end{figure}

Under label skew, each client sees a
different part of the label space, and a model fitted to one client's data does
not transfer to the others. HE-OFT addresses label skew at three points,
and none of them needs a decryption. First, every client fine-tunes from the same
public pretrained backbone and the same public initializer. The backbone fixes
the features every client and every querier computes, which lets one head serve
all of them. Pretraining keeps the clients' solutions close enough for their
displacements to combine in one
round~\cite{ilharco2023editing,malinovsky2024oneround}. A common start alone does
not suffice under strong skew. In an earlier configuration that averaged the displacements of
heads started at random, Banking77 reached $0.378$ at $\alpha=0.05$, $0.443$ at
$0.1$, $0.585$ at $0.3$ and $0.669$ at $1.0$, against $0.887$ for pooled
training. Heads started from an initializer aligned on raw client examples held
$0.888$ to $0.891$ at every skew, but that initializer exposes the examples. Under
strong skew each client holds a few of the seventy-seven classes, and training
drives down the logits of the classes the client never sees. The average of ten such
displacements suppresses every class some client lacks, and one round does not
recover it. Second, the clients that hold a class
decide its row of the shared head. Every class
row has at least one holder in every cell we measured. Third, the federation
selects between the two arrangements under encryption (\cref{fig:regret}).

The simplest rule for choosing between the two arrangements fails
systematically. The estimator of \cref{sec:selection} corrects it and discloses
only the index of the selected arrangement.
Each client measures both arrangements on data held back
from its own training and the federation takes the sample-weighted majority. Across
all fifteen cells of \cref{tab:headline}, this procedure selects the personal
adapter every time, including on AG-News and TREC where the personal adapter is
$0.17$ and $0.18$ worse. Weighting each class equally within a client's held-out
set selects identically.

The two measurements estimate different quantities. The shared
arrangement is one model, and the union of the clients' data is the whole
training set. Pooled held-out measurements thus estimate its global accuracy to
within $0.02$ on average. The personal arrangement is $\Nc$ different models. Client $j$'s
held-out data measures only client $j$'s own model, on client $j$'s own
distribution, which is the distribution that model was fitted to. Its measured
accuracy exceeds its true global accuracy by $0.28$ on average. A client's
held-out set also contains only the classes that client holds. The client
therefore cannot measure locally the quantity that decides the comparison, its
accuracy on classes it has never seen. The rule compares an accurate number with
an optimistic one, and more held-out data does not remove the bias.

\Cref{tab:select} reports the estimator of \cref{eq:estimator} against that
baseline. Scoring both arrangements under the federation's own class prior selects
correctly in $12$ of $15$ cells. The shared arrangement pools evidence across
clients, and the personal arrangement's unobserved classes score zero.
Filling the unobserved classes from each client's rarest classes instead of with
zero selects correctly in $13$. Neither variant has a fitted threshold. Each compares two
numbers on the same absolute accuracy scale.

\begin{table}[t]
\centering
\caption{Choosing between the two servable arrangements without decrypting either.
Cells are counted over the five tasks and the four CIFAR-10 partitions, at three
seeds each. Regret is the mean accuracy a wrong choice gives up, over the cells
in which the rule chooses wrongly.}
\label{tab:select}
\footnotesize
\begin{tabular}{lcc}
\toprule
Selection rule & correct cells & regret \\
\midrule
majority of held-out accuracies      & 8/27 & 0.074 \\
same, class-balanced within a client & 11/27 & 0.085 \\
always the shared head               & 19/27 & 0.198 \\
global-prior estimator, zero fill    & 23/27 & 0.021 \\
global-prior estimator, rare-class fill & $\mathbf{24/27}$ & $\mathbf{0.020}$ \\
\bottomrule
\end{tabular}
\end{table}

Two further results concern the estimator. On Banking77, where a wrong
choice costs $0.48$, it selects correctly on all three seeds, by margins of $0.13$
to $0.24$. On the AG-News seed where the partition drops three clients, the shared
head reaches only $0.402$ and the personal adapter is better. The estimator
switches to the personal adapter there, which a rule keyed to the number of
classes would not do.

The estimator's estimate of the
shared arrangement's global accuracy is within $0.019$ on average across the
fifteen cells of the five tasks and within $0.005$ across the twelve CIFAR-10
cells, computed from encrypted local evidence without a test set.

The twelve CIFAR-10 cells separate the two rules more sharply than the five
tasks do. The shared head over the bare backbone is better in eleven of them. The majority
of held-out accuracies chooses the personal adapter in all twelve and is thus
correct once. The estimator is correct eleven times, and its one error costs
$0.006$. The vote does not fail on narrow margins. Each client measures the personal
arrangement on its own data, where that arrangement does best, and the vote adds
up such measurements.

\paragraph{Scope of the selected index} Each client could
instead adopt its own winner, which reveals one index per client and not one for
the federation. We measured that rule. It selects the personal arrangement almost
everywhere: ten clients out of ten on twelve of the fifteen cells, nine of ten on
two, and seven of seven on the AG-News seed that drops three clients. It reaches
$0.542$ on AG-News at seed $42$ where the shared arrangement reaches $0.809$. \Cref{sec:selection} names this bias. Per-client scoring keeps it, because a
client still measures the personal arrangement on its own distribution. The
winning arrangement does vary between clients, but a rule keyed to that variation
inherits the same optimism.

\paragraph{Packing and the setup traffic} A ciphertext at ring degree
$2^{15}$ carries $2^{14}$ slots, and the head has $\Cc d$ entries at $d=768$.
Four classes fit in one ciphertext and a hundred need five, which the server adds
independently. The cost of \cref{sec:exp-cost} thus grows with $\Cc$ through the
argmax and not through the additions. The head application is a slot-wise product
followed by a rotate-and-sum over $\log_2 d$ rotations. The rotation key set is
hence the ten power-of-two rotations that fold a $768$-entry block. \Cref{tab:comm} itemizes the $344$\,MiB each client sends at setup.
The relinearization and Galois shares dominate, and neither recurs.

\paragraph{Cost of selection} We measured the count of $2\Nc\Cc$
comparisons. At five parties and a hundred classes the circuit performs $1000$
scalar comparisons, which equals $2\Nc\Cc$, in $81$ sequential sign circuits, and
the whole phase takes $17.5$ minutes. The circuit does not compute one encrypted
argmax per held-out example. Each client reserves one held-out example per class, and those examples pack
into a single score ciphertext. We measured the selection of the
true-class logit with a plaintext mask. The protocol uses an encrypted one-hot
label instead, which adds one ciphertext product per client and arrangement. We measured this phase under collective
refresh and not under the server-side bootstrapping HE-OFT specifies. Its traffic
is thus an upper bound, and its wall clock a lower one.

\paragraph{Contents of the selection upload} Selection costs each
client a second encrypted message. It carries per-class counts of correct predictions on data
the client held back, a measurement of two finished candidates. No gradient,
update or intermediate state passes through it. One-shot in this paper means that
no intermediate training artifact is exposed, and the selection message contains
none.

\paragraph{Disclosure of the estimator} The federation decrypts one value,
the index of the adopted arrangement. The per-client per-class accuracies stay
encrypted. Together with the count matrix, they would describe each client's label
distribution alongside its model quality.

\subsection{Comparison With Plaintext and Differentially Private \mbox{One-Shot} Federated Learning}
\label{sec:exp-peers}

DENSE~\cite{zhang2022dense}, GH-OFL~\cite{turazza2026ghofl} and FedAUXfdp~\cite{hoech2022fedauxfdp} are the one-shot methods closest to HE-OFT
(\cref{tab:related}). They train other architectures from other starting points,
and each hands every participant a model in plaintext. We thus cannot rank
their accuracy against ours directly. We compare accuracy on their own
partitions, the accuracy each method gives up for its protection, and what its
released model gives an adversary. The peer group
evaluates on different tasks from ours, and we therefore ran HE-OFT on theirs. We adopt two
published partitions of CIFAR-10: $\Nc=5$ with $\alpha\in\{0.1,0.3\}$, the
configuration of~\cite{zhang2022dense}, and $\Nc=20$ with
$\alpha\in\{0.04,0.16\}$, the configuration of~\cite{hoech2022fedauxfdp}.

\begin{table}[t]
\centering
\caption{CIFAR-10 on two published partitions, three seeds, on the whole
$10{,}000$-image test set. The
disclosed model aggregates and decrypts both the adapter and the head. The last column is the share of the disclosed model's accuracy that HE-OFT
keeps without disclosing a model. All are measured on HE-OFT at each prior paper's partition.}
\label{tab:peers}
\footnotesize
\setlength{\tabcolsep}{5pt}
\begin{tabular}{cccccc}
\toprule
$\Nc$ & $\alpha$ & $\Aloc$ & $\Asel$ & $\Adis$ & $\Asel/\Adis$ \\
\midrule
$5$  & $0.10$ & $0.384$ & $0.949$ & $0.963$ & $0.986$ \\
$5$  & $0.30$ & $0.570$ & $0.952$ & $0.966$ & $0.985$ \\
$20$ & $0.04$ & $0.212$ & $0.950$ & $0.947$ & $1.003$ \\
$20$ & $0.16$ & $0.397$ & $0.953$ & $0.961$ & $0.992$ \\
\bottomrule
\end{tabular}
\end{table}

At $\Nc=5$ and $\alpha=0.1$ we measure $\Asel=0.949$, against the $0.503$
of~\cite{zhang2022dense}. Those methods train a ResNet-18 from scratch, and
every client here fine-tunes a frozen public ViT-B/16. Absolute accuracy across
papers therefore reflects the backbone. The gain over a client alone uses one
backbone on both sides, and at that partition it runs from $\Aloc=0.384$ to
$\Asel=0.949$.

The accuracy each method loses by protecting the contribution can be compared across papers. Every paper measures it against its own unprotected baseline on its own
architecture, and the backbone cancels. By this measure the peer group falls into
three cases. A method leaves the contribution unprotected, protects it with
differential privacy, or protects it cryptographically.

DENSE and FedDF send the contribution in plaintext and report no such cost. Each client sends the server a trained model in one case and
repeated predictions over many rounds in the other, both in plaintext.
\Cref{sec:related-leakage} shows that an adversary targets these messages.

FedAUXfdp protects its contribution with differential privacy. Against its own
undefended baseline on CIFAR-10, it loses $0.006$ of
accuracy at $\varepsilon=1.0$ and $0.626$ at
$\varepsilon=0.01$~\cite{hoech2022fedauxfdp}. Differential privacy bounds membership advantage by
$e^{\varepsilon}-1$~\cite{yeom2018privacy}, which constrains an attack once
$\varepsilon<\ln 2$, at budgets where FedAUXfdp trades more accuracy for privacy.
The bound is $1.72$ at $\varepsilon=1$, $0.65$ at $0.5$, $0.105$ at
$0.1$ and $0.010$ at $0.01$, and FedAUXfdp loses $0.006$, $0.029$, $0.413$ and
$0.626$ of accuracy at those budgets. It holds for the unit that FedAUXfdp's
mechanism protects, and the released model remains in every participant's hands
at any budget.

HE-OFT protects its contribution cryptographically. Against the same
computation in plaintext, encryption does not degrade accuracy, because the decrypted result
matches the plaintext computation to the encoding precision of the scheme. No method in the peer group withholds the model. HE-OFT keeps $98.5$ to $100.3$ per cent of the disclosed model's
accuracy over \cref{tab:peers}, and $85.3$ to $96.3$ per cent over
\cref{tab:headline}.

Every method in this group releases a model, and a participant holding it has a
copy at fidelity $1$ without a query. We measure membership on the model HE-OFT
itself would disclose, the counterpart of what these methods release. The
likelihood-ratio attack is strongest on CIFAR-100 and Banking77, where it
identifies four to nine times as many members as a random guess at a
false-positive rate of $1$ per cent. Every surface HE-OFT exposes stays close to a
random guess (\cref{tab:surfaces}). The attack reaches an area under the
curve of $0.690$ on CIFAR-100 and $0.579$ on Banking77, and at that
false-positive rate it identifies $7.2$ to $8.5$ and $4.3$ to $5.3$ per cent of
the members. On the three text tasks where we also attack HE-OFT, every surface
HE-OFT exposes stays between $0.47$ and $0.53$. GH-OFL sends per-class feature
statistics in plaintext, and these include the class totals that HE-OFT keeps
encrypted (\cref{sec:split}).

\paragraph{A partition where withholding costs nothing} At $\Nc=20$ and
$\alpha=0.04$ we measure $\Asel>\Adis$, by $0.003$ on average, on two seeds of
three. At that concentration each client holds close to a single class.
Averaging the adapters then pulls the representation toward every local task at
once and improves none of them. The arrangement that leaves the public backbone
unchanged avoids this, and the estimator selects it. In this cell the Dirichlet
draw leaves some clients with fewer than twenty samples, too few to train, and the
effective federation has $17$ to $19$ clients.

\paragraph{Selection on these partitions} The estimator of \cref{sec:exp-select}
selects correctly in $11$ of the $12$ cells behind \cref{tab:peers}. The exception
is $\Nc=5$ with $\alpha=0.3$ on one seed, where the selected arrangement reaches
$0.945$ and the other reaches $0.953$.

The disclosed model does not satisfy the requirements of
\cref{sec:setting}, and we do not present it as a configuration of this protocol.
Disclosure changes the threat model. A participant that holds the model in
plaintext can inspect it, run inference on it without limit, redistribute it, and
mount the white-box attacks that \cref{sec:related-leakage} surveys. The query
allowance of \cref{sec:exp-leak} then bounds nothing. A deployment that may
disclose the model should use a protocol designed for disclosure, and prior
one-shot federated learning offers several.

\subsection{Cryptographic Cost and Scalability}
\label{sec:exp-cost}

HE-OFT's traffic consists of a one-time setup and $13.5$\,MiB per query at $\Nc=10$. We implement HE-OFT in multiparty
CKKS on Lattigo~\cite{lattigo,mouchet2021multiparty} and simulate no
cryptographic step. All figures use ring degree $2^{14}$ for the shallow operations
and a deeper chain at ring degree $2^{15}$ for the circuits that need it. Timings
are single-run wall clock on one core of a VALAR CPU node. Communication figures are exact byte counts.

The parameter sets are chosen inside a $128$-bit budget. At ring degree
$2^{15}$ the residual chain is one $55$-bit prime over eight $45$-bit levels, for
$\log QP=537$, and the bootstrapping ring is $2^{16}$ at $\log QP=1480$. The
search enumerates candidate chains and rejects any whose bootstrapping modulus
exceeds the budget, and every latency below is measured at those parameters.

\paragraph{Cost of query encryption} Applying the head to an encrypted
query costs $39.0$\,ms against $5.4$\,ms for a plaintext one, a factor of seven.
Encrypting the query hides the feature vector from the server, which could otherwise invert it to recover the client's input. The factor is small beside the argmax that follows.

\paragraph{End-to-end query cost} \Cref{tab:costcmp} gives HE-OFT's best test time, at four classes, measured end to end from the query's encryption to the querier's decryption. \Cref{fig:trends}(d) gives the test time at every label-space size we measured, and it grows with the number of classes. The GPU figures use Phantom~\cite{yang2024phantom}, with level restoration measured on an NVIDIA H100. The key switch grows linearly with the number of clients but stays below one second at $\Nc=20$. A query that returns scores spends most of its time applying the encrypted head. A query that returns the label also computes the encrypted argmax, and restoring levels takes almost all of the argmax. The server restores the levels
alone, under bootstrapping keys that the clients generate together once, and no
restoration therefore needs a round with the clients. The
tournament needs $\lceil\log_2\Cc\rceil$ rounds against the $\Cc-1$ of the
sequential fold it replaces.

slytHErin, the closest published system for encrypted inference, evaluates a twenty-layer network under the same multiparty setting in $245.58$\,s at three parties and $354.17$\,s at twenty~\cite{intoci2023slytherin}. It runs batched on twelve cores and returns the score vector. Our $443.1$\,s and $1713.1$\,s cover one unbatched, label-only query on one core. A copy from labels costs an adversary $16$ to $260$ times as many queries as one from scores (\cref{sec:exp-leak}). Their batch
amortizes to about $0.95$\,s per sample at ten parties. Their cost lies in the
network and ours in the argmax, and the two figures thus describe the setting
without ranking the systems. They return a score vector and stop before the
reduction that makes up $96$ to $99$ per cent of our query. A system returning a
label would also pay for that reduction.

Three measures could lower that figure, and each acts on a different cost.
Multi-core execution does not shorten the circuit, because the rounds are
sequential and the comparisons within one round already form a single operation on
a single ciphertext. Multi-core execution applies inside each primitive, across the residue moduli
of the residue number system decomposition, and our single-threaded measurement
uses none of it. Overlapping computation with communication raises throughput but
not latency, because the level restorations lie on the critical path of one
query. A server that answers many queries can pipeline them. Only hardware
acceleration acts on the dominant term. The argmax is bootstrap-bound, and a GPU
implementation helps it most, although for the reason given below we do not
project a single number.

\paragraph{The encrypted reciprocal} Inverting the per-class totals under
encryption prevents any coalition from reading the remaining client's class
histogram (\cref{sec:split}). At $\Nc=10$ the inversion takes $4.21$\,s and needs two
collective refreshes. It agrees with the plaintext reciprocal to a relative error
of $1.9\times10^{-8}$, and it therefore does not degrade accuracy. The figure grows slowly with the
federation, $3.16$\,s at $\Nc=5$ and $6.30$\,s at $\Nc=20$, because only the
refreshes involve the clients. It is the only deep operation in building the shared head,
and it runs once per aggregation and not per query. A few seconds are negligible
beside the local training that precedes it.

\paragraph{Cost of the encrypted argmax} The argmax is the one deep operation of the serving
path. Evaluated as a sequential fold of $\Cc-1$ comparisons it takes about
$16$\,s per comparison, which is roughly twenty-six minutes at a hundred classes.
Packing the logits into one ciphertext and reducing them by a tournament lowers the
number of sequential comparisons from $\Cc-1$ to $\lceil\log_2\Cc\rceil$. The
tournament brings the same computation to under two minutes at a hundred classes, a
fourteenfold reduction. Both figures restore levels by collective refresh, and the
ratio thus isolates the effect of the tournament. Homomorphic evaluation under a
collectively generated key works as under a single key, and single-key results
therefore carry over. GPU implementations of the same reduction report a
full-vocabulary argmax in well under a second~\cite{zhang2025nexus}.

The tournament returns the largest logit, and HE-OFT needs its index. Carrying the index through the fold costs $4.4$ per cent at four classes
and $9.4$ per cent at a hundred, and the fold returns the correct index at every
class count we measured. We measured this twice, once under collective refresh and
once under the server-side bootstrapping HE-OFT specifies. The premium agrees to
within half a point, although the absolute latency differs thirteenfold. The cost
of the index thus depends on the fold and not on the mechanism that restores its
levels.

\paragraph{Ring degree and client count} The ring degree is the security
parameter, and the number of clients decides how many shares a decryption needs.
\Cref{fig:cost} separates them, from one measurement over the full grid.

The per-query arithmetic depends on the ring degree alone. It roughly
doubles with each doubling of the degree and stays flat in $\Nc$. The encrypted
product takes $38.4$, $39.0$ and $39.2$\,ms at $\Nc=5$, $10$ and $20$.

The key switch that returns a label depends on both, and doubles with each. It
runs from $103$\,ms at the smallest setting to $1.73$\,s at the largest. Traffic
follows the same law, at $2.0$\,MiB per participating client, because each
contributes one independent share and the server only sums them. The
encrypted reciprocal grows more slowly, $3.16$\,s at $\Nc=5$ against $6.30$\,s at
$\Nc=20$, since only its two refreshes involve the clients.

\begin{figure*}[t]
\centering
\includegraphics{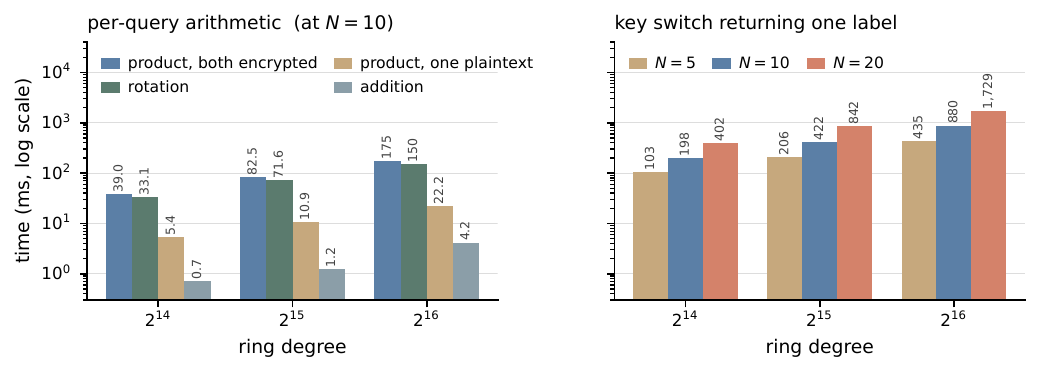}
\caption{Per-operation cost against the two axes that set it, from one
measurement over the whole grid. Left: the per-query arithmetic at $\Nc=10$,
which depends on the ring degree and not on the federation size. Right: the key
switch that returns one label, which grows with both. Read in seconds, the left
panel spans $0.0007$ to $0.18$\,s and the right panel spans $0.10$ to $1.7$\,s.
Neither is the cost of a query. Measured end to end, one query totals $443.1$\,s at four classes and $1713.1$\,s at a hundred, of which the encrypted argmax between them is $96$ to $99$ per cent.}
\label{fig:cost}
\end{figure*}

\paragraph{Hardware acceleration} The figures above are single-threaded CPU
figures. The argmax dominates the latency of a query. We therefore estimate what specialized hardware does to it. Since the cost of a CKKS operation grows with the
ring degree, the comparison is only meaningful at a fixed one. \Cref{fig:cost}
gives our figures, $39.0$, $82.5$ and $174.9$\,ms for a ciphertext-by-ciphertext
product at ring degrees $2^{14}$, $2^{15}$ and $2^{16}$, and $33.1$, $71.6$ and
$149.7$\,ms for a rotation. Phantom, a CUDA implementation of
CKKS~\cite{yang2024phantom}, reports the same primitives at ring degree $2^{16}$
on an A100. The product takes $1.61$ to $8.56$\,ms and the rotation $1.53$ to
$8.49$\,ms, over modulus chains of $29$ to
$41$ levels. We give no single ratio against those figures, because ours
are single-threaded Lattigo at a chain we did not match to theirs.

Level restoration is $94$ to $97$ per cent of a query and costs $22.6$ to $23.4$\,s per restoration on one core at every label-space size. A full-slot bootstrap costs $28.6$\,s in
OpenFHE at the same ring degree and $29$ levels~\cite{agullo2025fideslib}. We measure one restoration under Phantom on an NVIDIA
H100 at $2.28$\,s, at the ring degree $2^{16}$ HE-OFT restores at, a factor
of ten. Phantom publishes no bootstrapping benchmark, and no GPU library we found
restores levels under collectively generated keys. This figure therefore has no
published counterpart.

The accelerated figure is computed from that one measurement, not from a
ported implementation. A query performs $19$ restorations at four classes, $39$ at fourteen and $69$ at a hundred. The local evaluation that surrounds them, the head application included, costs $12.8$\,s, $29.1$\,s and $97.8$\,s on one core. Replacing each restoration with the H100 figure and leaving the local evaluation on the processor gives $19\times2.28+12.8=56.1$\,s, $39\times2.28+29.1=118.0$\,s and $69\times2.28+97.8=255.1$\,s. The measured end-to-end times on one core are $443.1$\,s, $919.3$\,s and $1713.1$\,s. We ported only the restoration.
The local evaluation is then the larger part of a query above four classes, and a
full port would have to move it as well.

\paragraph{One end-to-end query on a real head} Every figure above uses synthetic vectors. We therefore also checked that the encrypted path answers real queries as the plaintext model does. On the recorded AG-News
head at seed $42$, ten parties, ring degree $2^{15}$ and scale $2^{45}$, thirty-two
encrypted answers over the two servable arrangements all equal the plaintext
answers. The hardest case has a top-1 to top-2 gap of $0.037$, which the scaling
takes to $0.0068$, and its decoded index is within $3\times10^{-8}$ of the
integer. Every answer was correct, and the resolution limit thus lies below that gap.

Two of the sixteen answers under the shared arrangement are queries the
head itself gets wrong, and the encrypted path reproduces those errors.

\paragraph{Comparison with published work} We found no published
system that takes an argmax under encryption in a multiparty scheme, and there is
thus no matched number to compare with. The nearest match on the primitive is NEXUS, which
computes the same sign-based argmax under CKKS at the same ring degree
$2^{16}$~\cite{zhang2025nexus}. NEXUS returns an encrypted one-hot vector in $54$\,s over a label space of
$30{,}522$. Three conditions separate that figure from
ours. It is single-key, and it thus runs no collective key switching and no
smudging.
It uses $32$ threads where we use one core. It is also amortized over a batch of
$32$ inputs, where we serve one query.

POSEIDON answers
a query in $0.38$\,s at ten parties, at ring degree $2^{13}$ with six levels, and
returns the prediction vector after key-switching it to the
querier~\cite{sav2021poseidon}. slytHErin evaluates a twenty-layer network at
three model-providers with a latency of $245.58$\,s for a batch, at a throughput of
$1.19$ samples per second~\cite{intoci2023slytherin}. Two
ring-degree doublings and the argmax separate the first from our figure.
The second evaluates a whole network where we evaluate one linear map and
then compare its outputs. Our additional cost comes from computing the argmax and
returning only the label.

\paragraph{Against prior systems} \Cref{tab:costcmp} sets these figures beside the
costs the closest systems report. POSEIDON trains its networks under encryption,
and its cost grows with the model size~\cite{sav2021poseidon}. It keeps
every intermediate model encrypted and evaluates a deep circuit in every round.
SHE-LoRA fine-tunes a low-rank adapter over many rounds and encrypts a small part
of each update that the client selects~\cite{li2025shelora}. On
Llama-3.1-70B each client encrypts $0.125$ per cent of the parameters, at
$242.32$\,s of encryption and $385.84$\,MB of ciphertext per update. In its main experiments fifty
clients fine-tune OpenLLaMA-3B and BERT-Large over $200$ rounds. It sends the rest of
each update in plaintext and gives every client each round's aggregate under one
shared key, which keeps its encryption cost low. HE-OFT encrypts each
contribution once and forms no intermediate model. It trains in plaintext on each client and performs only
the aggregation under encryption. The aggregation takes a few seconds (\cref{tab:costcmp}) after one upload of $12$
to $44$\,MiB per client, and neither depends on the depth of the backbone. Both grow with the feature dimension $d$. At a hundred classes, raising
$d$ from $768$ to $4096$ takes the upload from $44$ to $212$\,MiB per client. The
server's sum of the uploads and its product with the reciprocal rise from $0.45$
to $2.38$\,s. The cost of a query does not grow with $d$, because the argmax that dominates it
takes only the $\Cc$ logits as input. Over the same range the cost changes by at
most $4.3$ per cent at each of four label-space sizes. The systems of \cref{tab:costcmp} return the score vector to the querier, which their threat models permit. HE-OFT also protects the served model from its querier, because scores let a querier recover a linear head from $d+1$ answers. It therefore returns only the label, which requires the argmax under encryption and accounts for its longer test time. A query that returns scores, as these systems do, costs HE-OFT about as much as one in CryptPEFT. The additional time is the cost of this stronger output guarantee, which \cref{sec:exp-leak} measures.

\begin{table}[t]
\centering
\caption{Cost against the closest systems, as published. Train time is the cost of producing the served model under encryption, and a dash means no training. Test time is the latency of one query, at HE-OFT's best case of four classes, with GPU figures projected (\cref{sec:exp-cost}). Output is what
the querier receives. From scores, $d+1=769$
answers recover HE-OFT's head, and from labels a copy of fidelity $0.90$ needs
$16$ to $260$ times as many (\cref{sec:exp-leak}). Hardware differs across rows. $^{*}$Extrapolated. $^{\dagger}$Per batch.}
\label{tab:costcmp}
\footnotesize
\setlength{\tabcolsep}{2.5pt}
\begin{tabular}{@{}llccc@{}}
\toprule
System & Model & \shortstack{Train\\time} & \shortstack{Test\\time} & Output \\
\midrule
POSEIDON ($\Nc{=}10$)~\cite{sav2021poseidon} & 3-layer MLP & $5{,}283$\,s & $0.38$\,s & scores \\
POSEIDON ($\Nc{=}50$)~\cite{sav2021poseidon} & CNN & $175$\,h$^{*}$ & $3.6$\,s$^{*}$ & scores \\
slytHErin & NN20 & -- & $245.6$\,s$^{\dagger}$ & scores \\
CryptPEFT & ViT-B/16 & -- & $2.26$\,s & scores \\
\midrule
\shortstack[l]{HE-OFT ($\Nc{=}10$),\\scores only} & RoBERTa-base & $4.21$\,s & \shortstack{CPU $2.6$\,s\\GPU $0.7$\,s} & scores \\
HE-OFT ($\Nc{=}10$) & RoBERTa-base & $4.21$\,s & \shortstack{CPU $443.1$\,s\\GPU $56.1$\,s} & label \\
\bottomrule
\end{tabular}
\end{table}

\paragraph{Communication} HE-OFT is one-shot in the sense of \cref{sec:intro},
since no intermediate training artifact is exposed, but the parties still
communicate. Key generation precedes training, selection costs at most
$2\Nc\Cc$ encrypted comparisons once, and serving costs traffic per query.

We measured the time and traffic of the selection, which the comparison
count does not give. At five clients and four classes it takes $400$\,s and
moves $3.4$\,GiB. At twenty clients and a hundred classes it takes $7111$\,s and
moves $145.4$\,GiB. The federation pays this once, before it answers anything,
and no part of it recurs per query.

The count stays within this bound, and the cost lies in the circuit depth. Our
implementation evaluates
$321$ sequential sign circuits at twenty clients and a hundred classes, where
$2\Nc\Cc$ allows $4000$ comparisons. Each
comparison under CKKS is a polynomial approximation of the sign. Every such
approximation consumes levels, and restoring them is $99.8$ per cent of the
traffic above. The current comparison literature has the same
cost~\cite{cheon2019comparison,lee2021minimax}, and we accept it in order not to
decrypt the two scores.

The selection is correct at every point we measured. It decodes the
winning arrangement in all twelve cells of the grid, and the per-class totals it
combines come back with a relative error near $10^{-8}$. \Cref{sec:exp-select}
reports which arrangement the rule chooses and how often it is right. 

\begin{table}[t]
\centering
\caption{Communication at $\Nc=10$, on the fifteen-modulus chain at ring degree
$2^{15}$ that the encrypted argmax requires. Share sizes do not depend on the
number of clients. The per-query total does, because the quorum returns one
key-switching share each. It does not depend on the number of classes,
because the level restorations are local to the server.}
\label{tab:comm}
\footnotesize
\begin{tabular}{llc}
\toprule
When & Item & Per client \\
\midrule
setup, once & collective public key share & $4.25$\,MiB \\
            & relinearization key share & $102.0$\,MiB \\
            & rotation keys, seven of them & $238.0$\,MiB \\
\midrule
training, once & encrypted head displacements & $12$ to $44$\,MiB \\
\midrule
per query & encrypted features, uploaded & $7.5$\,MiB \\
          & encrypted label, returned & $1.0$\,MiB \\
          & key-switching shares, ten & $5.0$\,MiB \\
\cmidrule(l){2-3}
          & \textbf{total per query} & $\mathbf{13.5}$\,\textbf{MiB} \\
\bottomrule
\end{tabular}

\end{table}

The training upload is the one row that moves with the number of classes.
A head displacement is $\Cc$ rows of $768$ values and a bias, and a ciphertext at
ring degree $2^{15}$ carries $16{,}384$ slots. A client therefore sends
$\lceil 769\Cc/16{,}384\rceil$ ciphertexts for each of the two arrangements and
one further ciphertext for its per-class counts. On the eight-modulus chain the
aggregation needs, a ciphertext is $4.0$\,MiB. That is three ciphertexts and
$12$\,MiB for AG-News, TREC, CIFAR-10 and DBpedia, nine and $36$\,MiB for
Banking77, and eleven and $44$\,MiB for CIFAR-100. It is paid once and it is
smaller than the rotation keys of the setup in every case.

The per-query total is $8.5 + 0.5\Nc$\,MiB, since the quorum returns one
key-switching share each. It is $11.0$\,MiB at $\Nc=5$ and $18.5$\,MiB
at $\Nc=20$. It is $6.8$\,MiB at ring degree $2^{14}$ and $27.0$\,MiB at
$2^{16}$, and it does not vary with the number of classes. Setup costs $344$\,MiB per client, once.

\paragraph{Growth with the federation size} Every per-party quantity in
\cref{tab:comm} is independent of the client count. We measured the grid at
$\Nc\in\{5,10,20\}$, at ring degrees $2^{14}$, $2^{15}$ and $2^{16}$, and on both the aggregation and the serving chain. The collective public key share, the
relinearization key share, the rotation key share, the ciphertext sizes and the
key-switching share are the same at every client count. The structure of the
scheme predicts this, since a party's share depends on the ring and the modulus
chain and not on the number of parties.

Two consequences set the cost of a large federation. What a client sends
does not grow as clients are added. Setup thus stays at $344$\,MiB per client,
and the federation's total key-generation traffic grows linearly in $\Nc$, which is
$3.4$\,GiB at $\Nc=10$ and $17$\,GiB at $\Nc=50$. A query costs the querier a fixed
$8.5$\,MiB and the quorum one key-switching share each, and per-query traffic
thus grows linearly in $\tc$ and not in $\Nc$. A deployment that lowers the
threshold for availability therefore also lowers the traffic on every query.
Its privacy cost is that a smaller coalition can decrypt, and \cref{sec:scope}
weighs this against availability.

The argmax consumes its
levels, and they must be restored several times per query. Restoring them by a
collective refresh requires a share from every client for every refresh, which at
a hundred classes amounts to roughly $1.6$\,GiB of traffic for a single label.
Generating bootstrapping keys once, collectively, and letting the server restore
levels on its own reduces that traffic to zero. The server can do so because evaluation under a
collectively generated key works as under a single key. The protocol specifies the
latter. We measured both, and each has a cost.

\Cref{tab:btp} reports the specified design beside the collective refresh, on which the earlier figures were measured. Restoring levels at the server makes the argmax about thirteen times slower at every label-space size, from $400.8$\,s against $31.2$\,s at four
classes to $1468.3$\,s against $113.0$\,s at a hundred. The bootstrapping keys
cost $8.70$\,GiB, generated collectively in $46$\,s, against a stored figure of
$15.5$\,MiB that measured the extended secret key rather than the key set the
server evaluates with. The saving is in traffic. One key generation of
$8.70$\,GiB removes about $1.6$\,GiB from every later query, and the two designs
thus break even near the sixth query, after which the specified one costs less.
Precision changes as well. The collective refresh reproduces the
plaintext maximum exactly, and the server-side bootstrap agrees to $7.3\times
10^{-5}$.

\begin{table}[t]
\centering
\caption{The two level-restoration mechanisms at $\Nc=10$, evaluation at ring
degree $2^{15}$. The specified design restores levels at the server under
collectively generated bootstrapping keys at ring degree $2^{16}$. The measured
alternative restores them by a collective refresh, which costs a share from every
client. Latency is of the encrypted argmax alone, single-run wall clock on one core of a VALAR CPU node.}
\label{tab:btp}
\footnotesize
\setlength{\tabcolsep}{4pt}
\begin{tabular}{lccccc}
\toprule
$\Cc$ & 4 & 6 & 14 & 77 & 100 \\
\midrule
collective refresh, s & $31.2$ & $47.3$ & $63.7$ & $112.4$ & $113.0$ \\
server bootstrap, s & $400.8$ & $612.5$ & $832.4$ & $1473.0$ & $1468.3$ \\
bootstraps & $17$ & $26$ & $35$ & $62$ & $62$ \\
error, server bootstrap & \multicolumn{5}{c}{$1.3$ to $7.3\times10^{-5}$, against exact for the refresh} \\
\bottomrule
\end{tabular}
\end{table}

\paragraph{Source of the reported latency} Every latency
above is the specified design, in which the server restores levels on its own
under the collectively generated keys. \Cref{tab:btp} sets it beside the
collective refresh, which is about thirteen times faster and costs a share from
every client at every restoration. We report the specified design because the method
describes it. A protocol that moved gigabytes to the clients on every
query could not serve the federation sizes this paper considers. Every figure is single-core, and the restoration is the part that
would gain most from threads.

\subsection{Extraction and Membership at Test Time}
\label{sec:exp-leak}

HE-OFT exposes no training-time artifact and no model. The only remaining
channel is the sequence of answers the served model returns, and the query
allowance bounds that channel but does not close it. No gradient, update or
data-derived summary is ever available in plaintext. There is one exchange, and
its single artifact is a ciphertext. No party holds the model in plaintext, and
the white-box attacks that a released model admits have nothing to attack. The adversary is thus a client that asks the served model questions and reads labels.

\Cref{tab:surfaces} compares what HE-OFT exposes with the surfaces that prior
systems expose, measured on the same heads over three tasks, three seeds and
both arrangements. A system that releases the model gives a copy at fidelity $1$
without a single query, and the likelihood-ratio membership attack against the
disclosed model is strongest on the two large label spaces, CIFAR-100 and
Banking77. A system that returns scores lets a client solve for a linear head from
$d+1$ answers. Multi-round FL exposes every client's
updates on every round~\cite{zhu2025fedmia}. HE-OFT exposes labels alone, and a
copy at fidelity $0.90$ then costs $16$ to $260$ times the queries it costs from
scores.

\begin{table}[t]
\centering
\caption{What each surface gives an adversary, on HE-OFT's heads over three
tasks, three seeds and both arrangements. The model-in-plaintext row adds
CIFAR-100. The per-round row gives its source's setting and true-positive
rate (TPR) at a false-positive rate (FPR) of $0.1$ per cent. Queries is the cost
of a copy at the fidelity shown, and a dash marks a column that does not apply.
Fidelity measures how much of the hidden model an adversary obtains. Membership,
the area under the curve of our strongest attack, measures what the surface
reveals about the training data, and $0.5$ is a random guess.}
\label{tab:surfaces}
\footnotesize
\setlength{\tabcolsep}{3pt}
\begin{tabular}{@{}lccc@{}}
\toprule
Surface & Queries & Fidelity & Membership \\
\midrule
\begin{tabular}[t]{@{}l@{}}model in plaintext\\\cite{zhang2022dense,turazza2026ghofl,hoech2022fedauxfdp}\end{tabular}
  & -- & $1$ & $0.50$ to $0.69$ \\
\begin{tabular}[t]{@{}l@{}}scores per query\\\cite{sav2021poseidon,intoci2023slytherin,xia2026cryptpeft}\end{tabular}
  & $769$ & $1.000$ & $0.49$ to $0.53$ \\
updates per round~\cite{zhu2025fedmia}
  & -- & -- & \begin{tabular}[t]{@{}c@{}}TPR $25.3\%$\\at FPR $0.1\%$\end{tabular} \\
\midrule
labels only, HE-OFT
  & \begin{tabular}[t]{@{}c@{}}$1.2\times10^4$ to\\$2.0\times10^5$\end{tabular} & $0.90$ & $0.47$ to $0.52$ \\
\bottomrule
\end{tabular}
\end{table}

\begin{figure}[t]
\centering
\includegraphics{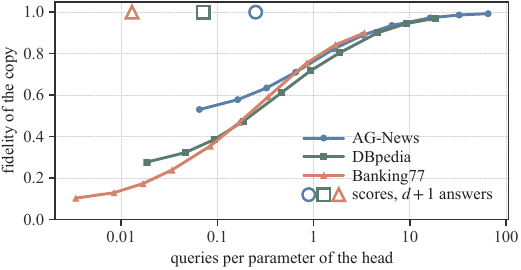}
\caption{Fidelity of a copy of the served head against the queries spent per
parameter of the head, mean over three seeds and both arrangements. Each line is
one task under the label-only interface that HE-OFT serves. Each hollow marker,
in the colour and shape of its task, is an interface that returns scores, where
solving from $d+1$ answers recovers the head exactly.}
\label{fig:extraction}
\end{figure}

Two different bounds apply to this channel. The first is cryptographic. By
\cref{thm:semihonest}, HE-OFT's messages are computationally independent of the
private data. The second is not cryptographic.
A client that queries the served model at points of its own choosing can, with
enough queries, reconstruct a functional copy of the shared head. Returning the
label instead of the logits removes the linear solve that would recover the head
in as many queries as the feature dimension. This raises the cost of a copy
without making it infinite. We therefore state the guarantee as a bound on
queries, and we measure what that bound has to be.

\paragraph{Cost of extraction} We give the adversary every advantage the
protocol admits. It is a participating client. It computes query features itself,
may submit an arbitrary vector in place of a real input, and is charged one query
per label. Submitting such a vector deviates from the protocol, and the adversary
is thus a malicious client whose deviation the encryption hides from the server.
The threat model does not fix what an honest client queries. A client whose
inputs repeat learns little from each further label, and a client whose inputs
vary widely comes closer to this attack. The adversary fits a linear model to the labeled queries, and we score the copy by \emph{fidelity}, the rate at which it agrees with the served
model on held-out features. \Cref{fig:extraction} reports the result. A constant
copy reaches the share of the most common label, $0.488$ on AG-News, $0.250$ on
DBpedia and $0.083$ on Banking77.

Extraction is possible, and we measure its cost. Reaching fidelity $0.90$ takes
about $1.2\times10^4$ queries on AG-News, $5.0\times10^4$ on DBpedia and
$2.0\times10^5$ on Banking77. Those figures follow the size of the head and not the
task. The head has $\Cc d$ parameters. In each case fidelity $0.90$ costs
between three and five times that number of queries, with $0.80$ costing roughly
$1.5$ times and $0.95$ roughly ten times. The multiplier thus depends on the
target fidelity. Writing $\varepsilon$ for the shortfall from a perfect
copy, our three tasks fit $Q\approx 0.4\,\Cc d/\varepsilon$ across the measured
range. For a linear head on features of this kind, a deployment can therefore set its
allowance from $\Cc$, $d$ and the fidelity it concedes, without
repeating this experiment. An
allowance set for one target does not carry to another. A published
measurement of the same interface is consistent with this. Extracting a softmax model from class
labels alone costs $100$ queries per parameter for agreement above $99.9$ per
cent~\cite{tramer2016stealing}, and the curve here reaches $0.993$ at $65$
queries per parameter on AG-News.

The allowance follows from these figures. A coalition of $\tc-1$ clients,
the largest that cannot decrypt, commands $(\tc-1)Q$ queries. Holding it below
fidelity $0.90$ at $\Nc=10$ thus requires $Q$ below roughly $1.3\times10^3$ per client on AG-News and
$2.2\times10^4$ on Banking77. The small label space is the binding case, because a head with few rows is cheap
to copy. The shared arrangement is also preferable there (\cref{tab:headline}),
and the deployments that gain most from the federation must therefore meter
queries most tightly.

Fidelity measures agreement with the served model. A copy at fidelity $0.90$ reproduces nine in ten of the
served model's decisions, including its mistakes, but the adversary has not
thereby learned any client's data. The recovered object is a linear map
over a public feature space, fitted from labels the adversary requested at points
it chose. No client record enters the attack at any stage. Turning a
copy into a statement about a particular record is the separate problem of
membership inference or reconstruction. That problem needs a surface HE-OFT
does not expose, since no gradient, update or data-derived summary is ever
available in plaintext. \Cref{sec:c2c} measures membership inference and
finds it close to a random guess.

\paragraph{The quorum's view of a query} The functionality tells the server that client $j$ asked a query, and HE-OFT tells the quorum the same. The label is key-switched to client $j$'s public key, and every contributing client sees whose key it is. A federation of competitors should
therefore treat the query pattern, who asked and how often, as disclosed to the
quorum as well as to the server. The query, the features and the label stay
encrypted to everyone but the querier. Hiding the pattern needs a further
mechanism, such as routing every answer through an anonymous channel, which costs
a round and is independent of the cryptography here.

\paragraph{Enforcement of the query allowance} The allowance $Q$ is a functionality
parameter, and a deployment has to place the counter somewhere. The server sees
every query and is the obvious place. \Cref{def:semihonest} lets it be corrupt,
however, and a corrupt server can refuse to count. This does not break the bound
of \cref{thm:malicious}, which already assumes an honest server, and
\cref{sec:malicious-ext} shows that this assumption is necessary. Under a
threshold below $\Nc$, a client can also spread its queries across quorums, and
each quorum sees only its own share. A per-quorum tally is then not a per-client
tally. A deployment that needs the bound enforced must
either keep $\tc=\Nc$ or run the counter
as a separate service that the clients trust for metering alone. Our measurements use $\tc=\Nc$.

\paragraph{A second query strategy} We also ran a
variant that spends half its budget bisecting toward the decision boundary and
then fits the same linear model. The variant is worse in $13$ of the $15$ cells, reaching
$0.675$ against $0.755$ on Banking77 at $5\times10^4$ queries. A
bisection chain returns many nearly identical points, and a linear fit gains less
from them than from independent ones. Tramer et al. report the same ordering for
multiclass models and offer the same reason, that the searches split across
several decision boundaries~\cite{tramer2016stealing}.

The ordering reverses once the budget is large relative to the head. On
AG-News the variant crosses over at about $16$ queries per parameter and at $65$
it leaves $0.0041$ of the decisions wrong against the random strategy's $0.0072$,
a factor of $1.77$. Banking77 never reaches $3.4$ queries per parameter at our
largest budget. Its cells thus lie below the crossover, and the count of $13$
reflects the budget grid and not the strategy. We report the random
strategy because it is the stronger one over the range in which an allowance would fall,
and because it is the simpler.

Neither strategy is the attack a cryptanalyst would mount, since both fit
a model to labeled points. The published constructions solve for each decision
hyperplane from points found on it, and they tie the hyperplanes to one another to
fix, once, the scale that the argmax
discards~\cite{lowd2005adversarial,carlini2025hardlabel}. That route costs a
factor logarithmic in the precision demanded, where ours is linear. Our figures
therefore bound the allowance from the wrong side, and they state what a
straightforward adversary pays.

\paragraph{Noise on the returned label} The query allowance is one way to
make extraction costly, and perturbing the answer is another. We measure the
second. The server flips the
returned label under randomized response at budget $\varepsilon$, which is the
label-level mechanism~\cite{mcsherry2007mechanism}. We record both the fidelity of the
copy and the accuracy the served model retains.
\Cref{tab:defence} reports the result.

\begin{table}[t]
\centering
\caption{Randomized response on the returned label. Copy fidelity after
$10^5$ label queries, against the accuracy the served model retains, at each
privacy budget $\varepsilon$ in the column heads, where $\varepsilon=\infty$
means no randomization. Mean over three seeds. The majority share a constant
predictor reaches is $0.488$ on AG-News, $0.250$ on DBpedia and $0.083$ on
Banking77.}
\label{tab:defence}
\footnotesize
\setlength{\tabcolsep}{3pt}
\begin{tabular}{llcccccc}
\toprule
Task & & $\infty$ & $8$ & $4$ & $2$ & $1$ & $0.5$ \\
\midrule
AG-News   & fidelity & $0.989$ & $0.986$ & $0.954$ & $0.889$ & $0.774$ & $0.598$ \\
          & accuracy & $0.649$ & $0.648$ & $0.620$ & $0.494$ & $0.372$ & $0.303$ \\
\midrule
DBpedia   & fidelity & $0.956$ & $0.946$ & $0.831$ & $0.571$ & $0.268$ & $0.139$ \\
          & accuracy & $0.789$ & $0.785$ & $0.640$ & $0.297$ & $0.150$ & $0.103$ \\
\midrule
Banking77 & fidelity & $0.777$ & $0.722$ & $0.364$ & $0.052$ & $0.020$ & $0.016$ \\
          & accuracy & $0.206$ & $0.203$ & $0.092$ & $0.028$ & $0.018$ & $0.015$ \\
\bottomrule
\end{tabular}
\end{table}

The mechanism lowers fidelity and accuracy together. At $\varepsilon=8$ it
changes neither, costing AG-News $0.001$ of accuracy and the copy $0.003$ of
fidelity, and below that both fall together. On AG-News the budget that first costs
the copy a fifth of its fidelity is $\varepsilon=1$. At that budget the copy still reaches
$0.774$ and the served model has fallen to $0.372$, under the $0.488$ share of its
most common class. On DBpedia accuracy falls by a larger fraction than fidelity at
every budget below $\varepsilon=8$. On Banking77 the copy degrades faster than the
task below $\varepsilon=4$, but only after accuracy has reached $0.028$ against a
majority share of $0.083$. The model the noise protects is then already worse
than a constant predictor.

This result favors metering queries over perturbing answers. Label noise
costs every honest query and every extraction query at the same rate, and an
extraction campaign spends $10^5$ queries where an honest client spends one. The
allowance charges the two differently, and the noise does not. We therefore state
the operational bound as a bound on queries and add no noise to the answers.

The defenses we survey do not apply under encryption. Query-distribution monitoring reads the query and tests whether a
client's stream looks like a
sweep~\cite{juuti2019prada}. Adaptive misinformation reads the query and answers
differently when it looks out of distribution~\cite{kariyappa2020adaptive}. Proof
of work reads the query to set its cost~\cite{dziedzic2022proof}. Watermarking needs
the answer to be a deterministic keyed function of the query, so that a repeat
cannot be filtered out~\cite{szyller2021dawn}. Our server holds a ciphertext of a
feature vector and, under the same assumption that protects the head, learns
nothing from it. Each of these defenses would need the server to decrypt the
query, which is the argument of \cref{prop:nogate} applied to a different gate.

A theorem also shows that output randomization is weak.
Against a server that answers incorrectly with constant probability $\rho$ below
one half, extraction costs a factor of $8/(1-2\rho)^2$ times a logarithm and no
more~\cite{chandrasekaran2020exploring}. The theorem bounds what the defense achieves, and
\cref{tab:defence} adds what the defender pays, which their analysis leaves out.
One mechanism satisfies both restrictions, and we have not measured it. Drawing a
fresh head from a distribution for each query needs no access to the query and
changes only answers near a boundary~\cite{alabdulmohsin2014adding}. It has a known break at a higher query
cost, and its accuracy cost under our arrangement is open.

\paragraph{Extraction against a released model} A released model is a
different case. A participant holding the model in plaintext has fidelity $1$ at
zero queries, can read the parameters directly, and faces no allowance, because
there is nothing to meter. Returning logits would be nearly as generous. The head
is linear in features the client computes itself, and $d+1$ logit queries recover
it by a linear solve, $769$ queries here against the $1.2\times10^4$ to
$2.0\times10^5$ that labels demand. We confirm this directly, recovering the head to fidelity $1.000$ from
$769$ answers when the served interface returns a distribution instead of an
index. Returning only the label turns extraction from a solve into a search, and the
query allowance sets the cost of that search. Deployed systems reached the same
conclusion. Recovering a production model's final projection layer became harder
once the providers withdrew the richer outputs that HE-OFT never
offers~\cite{carlini2024stealing}.

\section{Related Work}
\label{sec:related}

\begin{table}[t]
\centering
\caption{The closest systems on four axes, as each paper reports them. The model column names the largest model each paper evaluates. The plaintext column says whether any party ends up holding the
trained model in plaintext, and \emph{provider} means one party does while the
querier does not. A dash in the rounds column marks slytHErin and CryptPEFT,
which serve a model they do not train. HE, DP and 2PC are homomorphic
encryption, differential privacy and two-party computation. POSEIDON keeps the model under the
collective key and states that the querier does not obtain the model's weights.}
\label{tab:related}
\footnotesize
\setlength{\tabcolsep}{2.5pt}
\begin{tabular}{llcccc}
\toprule
System & Model & Rounds & Protection & \shortstack{Model in\\plaintext} & \shortstack{Querier\\receives} \\
\midrule
DENSE~\cite{zhang2022dense}          & ResNet-18   & one  & none & yes      & model \\
GH-OFL~\cite{turazza2026ghofl}       & VGG-16      & one  & none & yes      & model \\
FedAUXfdp~\cite{hoech2022fedauxfdp}  & MobileNetV2 & one  & DP   & yes      & model \\
SHE-LoRA~\cite{li2025shelora}        & Llama-3.1-70B & many & partial HE & yes & model \\
POSEIDON~\cite{sav2021poseidon}      & CNN         & many & HE   & no       & scores \\
slytHErin~\cite{intoci2023slytherin} & NN50        & --   & HE   & no       & scores \\
CryptPEFT~\cite{xia2026cryptpeft}    & ViT-B/16    & --   & 2PC  & provider & scores \\
\midrule
HE-OFT                               & RoBERTa     & one  & HE   & no       & label \\
\bottomrule
\end{tabular}
\end{table}

We compare the closest systems in \cref{tab:related}.
The multi-round cryptographic methods pay the encryption cost on every round and protect a per-round
sum. The one-shot methods send their single contribution in plaintext, or add
noise to it and lose accuracy. The federated fine-tuning methods send a small
object and protect none of it. One prior scheme is one-shot, federated and
encrypted, but only for a single-layer closed-form learner that cannot adapt a
deep representation~\cite{fedheonn2023}. No prior system meets three requirements
at once. A protocol that meets them exposes no training-time quantity, exchanges
only once for training, and never discloses the model, even to the parties that built
it.

\subsection{Attacks on Federated Learning}
\label{sec:related-leakage}

Federated learning (FL) began as an iterative procedure in which a server
repeatedly broadcasts a model, clients train it on their own data, and the server
averages the returned updates~\cite{mcmahan2017communication}. In every round each
client sends a new update computed on its private data. Several attacks show that
these updates leak more than the final model. An attacker can
reconstruct a batch from the gradient it
produced~\cite{zhu2019deep,geiping2020inverting}. An observer of the per-round
updates infers membership and unintended properties of another client's data, and
the attack is strong because the updates
repeat~\cite{nasr2019comprehensive,melis2019exploiting,zari2021efficient}. A server
that chooses the weights it broadcasts needs no inference, since crafted weights
make clients return verbatim copies of their own inputs, even through a
secure-aggregation layer~\cite{boenisch2023curious,fowl2022robbing}. Several works
share predictions instead of gradients, and the predictions also
leak~\cite{takahashi2023breaching,yang2023fdleaks,gu2023ldia,shi2025unveiling}.

Attacks that see only the final model are weaker, and Kairouz et al.
separate the two settings~\cite{kairouz2021advances}. Several works
infer membership from a released model, or invert it from released
confidences~\cite{shokri2017membership,yeom2018privacy,salem2019ml,fredrikson2015model,hitaj2017deep}. On a model that generalizes well, these attacks identify few training members
at a low false-positive rate~\cite{carlini2022membership}. An update is a detailed
view of one client's data and arrives every round. The final model is a single
average, and generalization moves it away from any one training example. A
protocol that never sends an update, and that exchanges only once, hence
leaves every adversary with the weaker attack.

\subsection{Protecting Training With Cryptography}
\label{sec:related-crypto}

Secure aggregation lets the server learn only the sum of the clients' updates,
using masks that cancel in the aggregate~\cite{bonawitz2017practical}. It is
lossless and protects one round's sum inside an iterative protocol. Kerkouche
et al. and So et al. show that the revealed per-round sums leak across
rounds~\cite{so2023securing,kerkouche2023client}. Under homomorphic encryption, the server computes
on contributions that stay encrypted. POSEIDON runs encrypted
stochastic gradient descent under multiparty CKKS, with activations replaced by
polynomial approximations so that the server can evaluate them
homomorphically~\cite{sav2021poseidon}. The model
loses no accuracy. The approach is iterative, and each encrypted
nonlinearity uses a deep multiplicative circuit with repeated bootstrapping. Several works study
those polynomial
activations~\cite{garimella2021sisyphus,baruch2022methodology,agamennone2025polynomial,alhossain2025training,ibarrondo2021fhebn}.
The cost grows with the network being trained. HE-OFT trains in plaintext on each
client and encrypts the aggregation, which lets it fine-tune a pretrained
transformer.

Several works encrypt the aggregation step of a plaintext training
loop~\cite{zhang2020batchcrypt,jin2023fedml,wei2025fedshe,li2025shelora,kemmaka2025hybagg}.
They run over many rounds, encrypt the contribution of every round, and protect
each round's sum. Single-key homomorphic encryption assumes that the server and
the clients do not collude~\cite{li2025shelora}, and HE-OFT uses the threshold
structure of \cref{sec:mhe} in place of that assumption.

Transfer learning has also been run under encryption. Several works train a
classifier head, a linear or logistic model, or a low-rank adapter on frozen
features under
CKKS~\cite{lee2023hetal,kim2018securelr,kim2018logistic,chiang2025cnn,privfedtl2026,alamin2025vit,li2024privtuner,frery2025private}.
Their setting, a frozen public backbone with a small trainable unit, is close to
ours. They run the optimizer on ciphertexts, with a multiplicative depth that grows
with the number of steps and a polynomial for every nonlinearity. HE-OFT runs the
optimizer in plaintext on each client and encrypts only its result.
Several of them also have a single data owner who outsources
computation. Our parties are mutually distrustful, train on their own data, and
must protect what they trained.

\subsection{One-Shot Federated Learning and Federated Fine-Tuning}
\label{sec:related-oneshot}

One-shot federated learning replaces the many rounds with a single
exchange~\cite{guha2019oneshot,wang2025towards}. Several works exchange predictions
on a shared public set, which in practice repeats the
exchange~\cite{li2019fedmd,itahara2021dsfl,lin2020feddf}. Other methods use a
single round. They distil the clients' models through a generator or synthesized
inputs, or fuse the models, in one round or in stages the authors count as
one~\cite{zhang2022dense,dai2024coboosting,liu2024fedlpa,zhang2024fedsd2c,tang2024fusefl}.
They show that one round can produce a usable model, with each client sending an entire model or a synthetic distillate of its data in plaintext. The attacks of \cref{sec:related-leakage} target such contributions.
In GH-OFL, each client runs a frozen pretrained encoder and uploads
per-class sufficient statistics once~\cite{turazza2026ghofl}. Those statistics
travel in plaintext, including the per-class totals, which stay encrypted in HE-OFT (\cref{sec:split}).

The one-shot methods that address privacy use differential privacy. FedAUXfdp
releases client contributions under a differentially private mechanism after
distillation through a pretrained extractor~\cite{hoech2022fedauxfdp}. Related
approaches protect a diffusion-generated surrogate or a teacher
ensemble~\cite{feddiff2024,li2021fedkt}, or apply a noise-free mechanism to the
distillation itself~\cite{sun2021fedmdnfdp}. These are our closest quantitative peers. They choose differential privacy and add the noise a privacy budget requires to the released artifact. HE-OFT chooses encryption, which costs the merge no accuracy. Differential privacy gives a
statistical bound on what a released artifact can reveal, at a cost in accuracy.
Encryption makes a contribution computationally indistinguishable from random to
every party without the key, and the merge loses no accuracy. Several works
apply differential privacy to transfer learning, running noisy gradient descent on
frozen features or on a low-rank
adapter~\cite{tramer2021features,mehta2022dpfeatures,yu2022dpft,li2022dplearners,liu2023dplora,xu2024dpdylora}.
They share our structure of a backbone and an adapter. Their released model trades accuracy for the privacy budget, and they train centrally or over many rounds.

Federated parameter-efficient fine-tuning sends a low-rank adapter in
place of the whole model. FedIT averages the adapters round by round~\cite{zhang2024fedit}, which approximates the average of their products. A client learns a product of two factors, and averaging the factors separately does not give
the average of the products. Several works correct that bias by stacking factors,
by reconstructing full products, by reconciling heterogeneous ranks, or by sharing
one factor
only~\cite{wang2024flora,bai2024flexlora,cho2024hetlora,guo2025fedsalora}.
FFA-LoRA freezes the randomly initialized down-projection and trains only the
up-projection. Averaging the trained factor is then exact~\cite{sun2024ffalora}.
Combining displacements from a shared initializer is task-vector merging, which
is equivalent to one-shot federated averaging. For foundation models in
plaintext, one round of fine-tuning
suffices~\cite{ilharco2023editing,tao2024taskarith,malinovsky2024oneround}. These methods send their adapters in plaintext, and HE-OFT encrypts its contribution.

\subsection{Secure Inference}
\label{sec:related-inference}

Secure inference answers queries from a model that stays encrypted. Several
works evaluate a transformer under encryption, either
non-interactively under CKKS or between two or three parties with secret sharing.
In each case the nonlinearities of attention dominate the
cost~\cite{zhang2025nexus,hao2022iron,pang2024bolt,dong2023puma}. The map HE-OFT serves is linear in the features. A trained map inside the
backbone would instead reduce serving to the problem these systems address.

Two systems are closer to HE-OFT. slytHErin evaluates a network under CKKS with the model
held under a multiparty key, and it key-switches the prediction to the querier~\cite{intoci2023slytherin}. The
same group built POSEIDON.
Together, the two systems come closest among prior work to a model that no party
decrypts. slytHErin evaluates the whole network homomorphically. The models it
reports are thus a twenty-layer and a fifty-layer network on a handwritten-digit
task, and not a pretrained encoder. It returns the prediction vector, and the
querier hence receives scores. By \cref{sec:exp-leak}, scores are a cheaper target
for extraction than a label.

CryptPEFT divides the work as HE-OFT does. A public frozen backbone runs in plaintext on the client, and the
client encrypts the resulting features. The provider evaluates only the adapter under
encryption, and the client receives the adapter's
output~\cite{xia2026cryptpeft}. HE-OFT likewise confines encrypted computation to
a small unit after the representation. CryptPEFT is a two-party protocol in which
the trained unit is the provider's plaintext, and it serves an adapter trained elsewhere. It
reports one inference in $2.26$\,s on CIFAR-100, and thus shows that serving a small encrypted unit after the backbone takes seconds. Because the adapter is one party's plaintext, CryptPEFT evaluates
no ciphertext-by-ciphertext product and restores no levels.

Two constructions for the encrypted maximum are faster than the tournament
of \cref{sec:serving}. One computes ranking and order statistics at constant
comparison depth, and the other decodes a language model without decrypting the
scores~\cite{mazzone2025ranking,avitan2025cutmax}. Both are single-key and omit
the collective refreshes that the threshold setting requires, and hence
neither transfers unchanged.

\section{Conclusion}
\label{sec:conclusion}

We presented HE-OFT, a federated fine-tuning protocol in which no party receives
the trained model. FL leaks mostly
through the quantities a protocol exposes during training, and some deployments
may not distribute the finished model. Each client
fine-tunes on a frozen public backbone, keeps its adapter, and uploads a single
encrypted head displacement. The server combines those displacements under
multiparty CKKS and never decrypts the result. A quorum of clients returns only the predicted label to the querier.

Because no party holds the model, the querier computes the features of its query
itself, with the public backbone and its own adapter. HE-OFT therefore shares
only a trained map after the last nonlinearity of the network. Across four text
classification tasks and one vision task, HE-OFT is more accurate than any client
alone. It keeps $85$ to $96$ per cent of the accuracy of a model that is
aggregated and disclosed.

\label{sec:scope}

\paragraph{Linearity of the shared quantity}
The requirement never to disclose the model forces this linearity
(\cref{sec:split}), and it is a limitation. A shared adapter inside the
backbone would adapt every participant's representation, where a local adapter
adapts only its owner's. Serving such an adapter reduces to secure transformer
inference (\cref{sec:related-inference}) and would carry its cost.

\paragraph{A public constant in the serving path} The serving circuit
compares logits through a polynomial approximation of the sign, and that
approximation resolves a fixed absolute margin. Multiplying every logit by a positive
constant preserves the argmax but changes the margin the circuit must resolve. We fix one such constant from the exported head. A deployment
holds the head only as a ciphertext and thus needs a public bound on its norm. We treat that bound as public (\cref{sec:security}) and do not
measure the cost of a loose bound.

\paragraph{Bounds on model extraction} The query
allowance of \cref{sec:exp-leak} is an operational control. Where no allowance separates honest use from extraction, calibrated noise
on the returned labels bounds the cumulative leakage
instead~\cite{mcsherry2007mechanism}. The mechanism composes with HE-OFT without
change. HE-OFT protects the contributions cryptographically, and differential
privacy would protect the answers statistically. None of our accuracy figures uses the mechanism.

\paragraph{Malicious participants} The cryptographic guarantees assume
semi-honest parties. A client that uploads a crafted head displacement can
bias the shared head. The server cannot inspect the upload because it is
encrypted. Two defenses are compatible with the encryption, and we leave both to
future work. One is a norm bound on each upload, enforced by a zero-knowledge
proof. The other is robust aggregation evaluated under encryption.

\paragraph{Availability} We state HE-OFT for a
$\tc$-out-of-$\Nc$ threshold (\cref{sec:security}) and take every measurement at $\tc=\Nc$.
This setting gives the strongest collusion resistance and the weakest
availability, since every client must take part to produce a label. Lowering
$\tc$ improves availability, but a coalition of $\tc$ clients can then decrypt. The query allowance need not tighten, because the bound of
\cref{sec:exp-leak} scales with $\tc-1$, the size of the largest coalition that
cannot decrypt. The
deployment chooses $\tc$, and we measured only $\tc=\Nc$.

The scheme admits any $\tc$ with $2\le\tc\le\Nc$. At a general threshold,
a query needs only $\tc$ of the $\Nc$ clients, and serving continues while a
participant is offline. Our implementation sets $\tc=\Nc$. In the configuration we
measured, a single unavailable client hence blocks every query until that client returns. A
deployment that needs serving to continue through churn sets $\tc<\Nc$ and accepts
coalitions of that size.

\Cref{sec:c2c} bounds the query channel at every threshold. Lowering
$\tc$ lowers the total allowance $(\tc-1)Q$ of the largest coalition that cannot
decrypt. The advantage $\delta_{\mathrm{wb}}$ does not change, since it is a
property of the shared head. The threshold therefore trades availability
against the size of the coalition that can decrypt.

We identify two promising directions for future work. The first is robustness to actively
malicious participants, which the present guarantee does not cover. Thus, robust
aggregation over encrypted contributions remains an open challenge. The second is a shared adapter inside the backbone,
which would require evaluating the transformer under encryption at every query.

\section*{Acknowledgments}
We acknowledge the support of T\"UB\.{I}TAK (the Scientific and Technological Research Council of T\"urkiye)
projects 124E091 and 124N941. The authors used Claude models to polish the manuscript.

\bibliographystyle{IEEEtran}
\bibliography{refs}

\end{document}